\newtheorem{theorem}{Theorem}
\newtheorem{corollary}[theorem]{Corollary}
\begin{document}
\title{Zero-Forcing Per-Group Precoding (ZF-PGP)
 for
Robust Optimized Downlink Massive MIMO Performance
}
\author{Thomas~Ketseoglou,~\IEEEmembership{Senior~Member,~IEEE,} and Ender~Ayanoglu,~\IEEEmembership{Fellow,~IEEE}

\thanks{T. Ketseoglou is with the Electrical and Computer Engineering Department, California State Polytechnic University, Pomona, California (e-mail: tketseoglou@cpp.edu). E. Ayanoglu is
with the Center for Pervasive Communications and
Computing, Department of Electrical Engineering and Computer Science,
University of California, Irvine (e-mail: ayanoglu@uci.edu). This work was partially supported by NSF grant 1547155.}}

\maketitle
\begin{abstract}
In this paper, we propose a new, combined Zero-Forcing Per-Group Precoding (ZF-PGP) method that achieves very high gains in comparison to ZF Precoding techniques, while it simultaneously offers individually separate streams to reach individual User Equipment (UE), i.e., it obliterates the need for coordinated, joint decoding by the group's UEs. Although ZF-PGP in general experiences a performance loss in comparison to the near-optimal Per-Group Precoding within groups (PGP-WG), it is much simpler to implement than PGP-WG. The method works extremely well in conjunction with Combined Frequency and Space Division and Multiplexing (CFSDM) in correlated channels or in cases in which the group's total number of antennas is greater than the group's available Virtual Channel Model Beams (VCMB) number. As we show by multiple examples, for a Uniform Linear Array (ULA), by combining the proposed technique with massive Multiple-Input Multiple-Output (MIMO) and Joint Spatial Division and Multiplexing for Finite Alphabets (JSDM-FA) type hybrid precoding or CFSDM a very efficient, near-optimal design with reduced complexity at the UEs is achieved, while each UE decodes its received data independently, without interference from data destined for other UEs. Due to JSDM-FA Statistical Channel State Information (SCSI) inner beamforming, the groups are easily identified. We show that the combined ZF-PGP scheme achieves very high gains in comparison to ZF and Regularized ZF (RZF) precoding and other modified sub-optimal techniques investigated, while its complexity is still relatively low due to PGP-WG. We also demonstrate the robustness of the proposed precoding techniques to channel estimation errors, showing minimal performance loss.
\end{abstract}
\IEEEpeerreviewmaketitle
\section{{Introduction}}
Massive MIMO  employs a large number of antennas in order to achieve very high spectral efficiency \cite{Marzetta1,Marzetta2,Marzetta3}. Due to extremely high data rate requirements in massive MIMO, downlink precoding is essential toward increased downlink data rates. Thus, for massive MIMO to be capable of offering its full benefits, quite accurate Channel State Information (CSI) is required at the base station (BS) for efficient downlink precoding purposes. There are techniques proposed for downlink linear precoding in a multi-user MIMO scenario, e.g., Joint Spatial Division and Multiplexing (JSDM) \cite{JSDM1,JSDM2, JSDM3}.
JSDM divides users in groups that are orthogonal and applies ZF or RZF precoding for the users of each group which offers low decoding complexity. However, the spectral efficiency of JSDM is low. A simplification to user grouping for JSDM with Finite Alphabet inputs (JSDM-FA) is proposed in \cite{TK_EA_TCOM2}, where the common sparse support of geographically similar users is employed to implement JSDM and a PGP-WG \cite{TK_EA_QAM} near-optimal linear precoder is employed for each group to offer outer precoding with high performance and with low complexity.  In addition, the JSDM-FA approach resorts to Statistical Channel State Information  (SCSI) to derive the user groups, i.e., its inner precoder is slowly varying with time. However, \cite{TK_EA_TCOM2} shows that without CFSDM, the users in a group need to jointly demodulate their information in order to achieve the benefits of PGP-WG. On the other hand, \cite{TK_EA_TCOM2} shows that with CFSDM and without joint user decoding there is a loss of about 25\% in spectral efficiency. Therefore, the presented schemes, although achieving near optimal linear precoding performance, fall short in the user decoding process, due to high complexity.

 In other past related work, \cite{Max} has presented an iterative algorithm for precoder optimization for sum rate maximization of  Multiple Access Channels (MAC) with Kronecker MIMO channels. Furthermore, more recent work has shown that when only SCSI\footnote{SCSI pertains to the case in which the transmitter has knowledge of only the MIMO channel correlation matrices, or in general slow-varying parameters \cite{Khan,Weich} and the thermal noise variance.} is available at the transmitter, in asymptotic conditions when the number of transmitting and receiving antennas grows large, but with a constant transmitting to receiving antenna number ratio, one can design the optimal precoder by looking at an equivalent constant channel and its corresponding adjustments as per the pertinent theory \cite{SCSI}, and applying a modified expression for the corresponding ergodic mutual information evaluation over all channel realizations. This development allows for a precoder optimization under SCSI in a much easier way \cite{SCSI}. Finally, \cite{LozanoA,TK_EA_QAM} presented for the first time results for mutual information maximizing linear precoding with large size MIMO configurations and QAM constellations. Such systems are particularly difficult to analyze and design when the inputs are from a finite alphabet, especially with QAM constellation sizes of $M\geq 16$. On the other hand, for non-linear precoding, recently \cite{Schoeber_THP,Schoeber_THP_TWC} has presented a robust, two-stage precoding approach, with channel estimation errors and for a hierarchical  two-stage (hybrid) precoding design and with an outer Tomlinson-Harashima (TH) precoder per group that aims at minimizing the maximum mean square error (MSE) per user for each group. These non-linear outer precoders achieve near Dirty Paper Coding (DPC) results \cite{DPC}, i.e., they almost attain the channel capacity of a Gaussian Broadcast Channel (GBC)\footnote{This GBC capacity (system throughput)\cite{DPC} precludes receiver cooperation.}, due to the two-stage JSDM-type approach.

 However, near-optimal mutual information linear precoding approach \cite{Xiao, TK_EA_TCOM2, LozanoA} offers the possibility of achieving the highest possible spectral efficiency under perfect or estimated CSI, in many cases significantly higher than the (non-precoded) group channel mutual information (GCMI) with full cooperation at the receiver\footnote{We use the term GCMI to mean the input-output mutual information with equally likely input symbols and without any type of precoding applying to the input of the MIMO channel.}. The latter is a strict upper bound for many techniques, including ZF, RZF, and TH non-linear precoding \cite{Schoeber_THP,DPC}. On the other hand, as \cite{TK_EA_TCOM2} has shown, the linear precoding approach requires cooperation at the demodulation of each group and perfect CSI. It is thus still an open and important problem to address the issue of near-optimal downlink precoding in conjunction with user-independent decoding, i.e., without Multiple Access Interference (MAI) present at the UEs and under mismatched decoding at the UE.

In this paper, we present near-optimal linear precoding techniques for downlink massive MIMO, suitable for QAM with constellation size $M\geq 16$ and CSIT at the BS with either a Uniform Linear Array (ULA) or a Uniform Planar Array (UPA) configuration, without the requirement for co-ordinated decoding at the UE. We show that by combining a ZF precoder (ZFP) with a PGP-WG one, a major simplification in the user separation within a group (USWG) problem results, i.e., simple decoding for each user. At the same time, the system can achieve the same near-optimal performance of PGP-WG achievable in the original downlink channel (without the ZF part) in high signal-to-noise ratio $\mathrm{SNR}$\footnote{In lower $\mathrm{SNR}$ the performance is still much higher than the one achieved by ZF-type precoding techniques.}. Thus, the same spectral efficiency results, as described in \cite{TK_EA_TCOM2}, can be achieved in high $\mathrm{SNR}$, albeit with much simpler decoding at the UE, as separate independent data streams are delivered to users. This high performance remains valid, even with channel estimation errors, i.e., the ZF-PGP precoder is robust. This has important and essential ramifications to linear downlink precoding for massive MIMO because an alternative technique employed in \cite{TK_EA_TCOM2} based on deploying orthogonal subcarriers to different users (CFSDM) results in significantly lower spectral efficiency. Finally, as the presented numerical results show, ZF-PGP performs dramatically better than ZF, RZF, and other suboptimal techniques discussed in this paper. Thus, ZF-PGP has many advantages over other linear precoding techniques.

The contributions of this paper can be summarized as follows:
\begin{enumerate}
\item It presents an analytical framework that builds on the concept of JSDM-FA and allows for near-optimal linear precoding on the downlink while it achieves fully separated data streams for each user by combining ZF with PGP-WG precoding.
    \item It shows that the ZF-PGP precoder results in an interesting effective channel representation of the overall downlink transmission chain.
\item It shows that the presented approach can achieve the full benefit of PGP-WG in high $\mathrm{SNR}$, i.e., it achieves the same power efficiency, which was shown to be near-optimal in the past.
\item It studies the problem of errors between the PGP-WG ideal (perfect CSI) optimal precoder employed for each group at the BS and the one resulting from estimating the channel with errors, i.e., with imperfect CSI, and shows only marginal loss for power efficiency.
\item For the type of channels that are present in massive MIMO, it shows that ZF-PGP significantly outperforms other commonly employed linear precoding techniques such as ZF, RZF, and other suboptimal precoding techniques even with imperfect CSI.
\end{enumerate}

The paper is organized as follows: Section II presents the system model and problem statement together with a short introduction of the JSDM-FA efficient downlink linear precoding of  \cite{TK_EA_TCOM2} in a JSDM fashion for ULA narrowband channels and the PGP-WG method. Section III presents the ZF-PGP precoder concept that allows for optimized mutual information precoding in each group, while it distributes the information to each user independently, i.e., without any cross-user interference. Then, the ZF-PGP concept is generalized to cases where the ZF-PGP precoder is estimated at the UE , i.e., under a downlink channel estimation with errors model. In Section IV, numerical results are presented for ZF-PGP with and without CFSDM and for ideal CSI as well as CSI with errors. Comparisons with many other precoding techniques are also made. Finally, our conclusions are presented in Section V.

\underline{Notation:} We use small bold letters for vectors and capital bold letters for matrices. ${\mathbf A}^T$, ${\mathbf A}^H$, ${\mathbf A}^*$, ${\mathbf A}_{\cdot, i}$, and ${\mathbf A}_{i,\cdot}$ denote the transpose, Hermitian conjugate, complex conjugate, column $i$, and row ${i}$ of matrix ${\mathbf A}$, respectively. ${\mathbf S}^T$ denotes a selection matrix, i.e., of size $k\times n$ with $k<n$ consists of rows equal to different unit row vectors ${\mathbf e}_i$ where the row vector element $i$ is equal to $1$ in the $i$th position and is equal to $0$ in all other positions, the specific ${\mathbf e}_i$ vectors used are defined by the desired selection.  ${\mathbf F}_{N}$ denotes the DFT matrix of order $N$. We use ${\mathbf h}_{u,g,k,n}$ for the uplink channel of user $k$'s antenna $n$ in group $g$. ${\mathbf H}_g$ is the uplink channel of group $g$, while ${\tilde {\mathbf H}}_g$ is its projection to the Virtual Channel Model (VCM) basis. The overall virtual channel, reduced size representation representation of ${\mathbf H}_g$ is denoted by ${\mathbf H}_{g,v}$. ${\mathbf H}_{d,g}$ represents the overall downlink channel for group $g$, i.e., due to time division duplex reciprocity ${\mathbf H}_{d,g} = {\mathbf H}_g^H$. For a ULA or a UPA, ${\mathbf H}_{u,g,k,n}$ represents the uplink channel of user $k$'s antenna $n$ in group $g$, then ${\tilde {\mathbf H}}_{u,g,k,n}$ is its projection to the two DFT matrices (VCM bases), ${\tilde{\mathbf h}}_{u,g,k,n}$ is its corresponding vectorized form.

\section{{System Model and Problem Statement}}
Consider the downlink precoding equation on a narrowband (flat-fading) massive MIMO system with a single cell and JSDM \cite{JSDM1}
\begin{equation}
{\mathbf y_d} = {\mathbf H}_u^H {\mathbf P} {\mathbf x}_d + {\mathbf n_d}, \label{original}
\end{equation}
where ${\mathbf y}_d$ is the downlink received vector of size $\sum_{g=1}^G N_{d,g}\times 1$, ${\mathbf x}_d$ is the $N_u\times 1$ vector of transmitted symbols drawn independently from a QAM constellation, where the downlink channel matrix ${\mathbf H}_d={\mathbf H}_u^H$, where ${\mathbf H}_u = [{\mathbf H}_1,\cdots, {\mathbf H}_G]$ is the $N_u\times K_{eff}$ uplink channel matrix from all $K$ users, employing $N_u$ receiving antennas at the BS, with $K_{eff}= \sum_{g} N_{d,g}$, where $N_{d,g}$ is the total number of antennas of all users in group $g$. Users have been divided into $G$ groups with $K_g$ users in group $g$ ($1\leq g \leq G$), with user $k$ of group $g$ denoted as $k^{(g)}$. In this paper, we assume that each UE employs (without loss of generality) $N_{d,k^{(g)}}=2$  antennas, with ($\sum_{g=1}^G K_g = K$), ${\mathbf H}_g =[{\mathbf H}_{g^{(1)}}\cdots {\mathbf H}_{g^{(K_g)}} ]$ being group $g$'s uplink channel matrix of size $N_u\times N_{d,g}$, with $N_{d,g}$ comprising the total number of antennas in the group, and where ${\mathbf n}_d$ represents the independent, identically distributed (i.i.d.) complex circularly symmetric Gaussian noise of mean zero and variance per component $\sigma_u^2 = \frac{1}{\mathrm{SNR}_{s,d}}$, where ${\mathrm{SNR}_{s,d}}$ is the channel symbol $\mathrm {SNR}$ on the downlink. The downlink symbol vector ${\mathbf x}_g$ of size $\sum_{g}^G{N_{d,g}}\times 1$ has i.i.d. components drawn from a QAM constellation of order $M$. We assume that Time Division Duplexing (TDD) is employed in the system, to be able to exploit the reciprocity between the uplink and downlink channels.
The optimal linear precoder ${\mathbf P}$ needs to satisfy
\begin{eqnarray}
\begin{aligned}
& \underset{\mathbf P}{\text{maximize}}
& & I({\mathbf x_d};{\mathbf y_d}|\hat{\mathbf H}_u)\\
& \text{subject to}
& &  \mathrm {tr}({\mathbf P} {\mathbf P}^H) = \sum_{g=1}^G N_{d,g}, \\ \label{eq_MMIMO}
\end{aligned}
\end{eqnarray}
where the conditioning indicates the estimated channel CSI and the constraint is due to keeping the total power emitted to the totality of downlink antennas with precoding equal to the one without precoding.

In order to reduce the  complexity involved in (\ref{eq_MMIMO}), JSDM was proposed in \cite{JSDM1}. JSDM divides users into approximately orthogonal groups assuming a Gaussian channel \cite{JSDM1}, based on approximately equal channel covariance matrices in each group. Because of the resulting approximate orthogonality between different groups, $I({\mathbf x_d};{\mathbf y_d})= \sum_{g=1}^G I({\mathbf x}_g; {\mathbf y}_g)$, where ${\mathbf x}_g ~\text{and}~ {\mathbf y}_g$ represent the data symbols and received data of group $g$, respectively (see (11) and (12) below and \cite{JSDM1}). Thus, the problem in (\ref{eq_MMIMO}) becomes equivalent to the one that maximizes the sum of the group information rates. The downlink precoder is then divided into two parts, i.e., it becomes a hybrid precoder comprising two stages: a) An inner pre-beamforming stage, and b) An outer Multi-User MIMO (MU-MIMO) precoder stage. On the other hand, JSDM-FA was proposed in \cite{TK_EA_TCOM2} to facilitate group determination and simultaneously reduce the dimensionality of the outer (intra-group) precoding problem, based on a projection of the actual uplink channels to the DFT orthogonal base of size $N_u$. We shortly review the JSDM-FA concept here, then we employ it for the purpose of this paper.

Assume without loss in generality that a ULA deployed at the BS along the $z$ direction and for flat fading, i.e., $B< B_{COH}$, where $B~\text{and}~B_{COH}$ are the RF signal bandwidth and the coherence bandwidth of the channel, respectively. Each user group on the uplink transmits from the same ``cluster'' of elevation angles $\theta_{g} \in [{\bar \theta}_{g} -\Delta \theta, {\bar \theta}_{g} +\Delta \theta]$, with ${\bar \theta}_{g}$ being the mean of ${\theta}_{g}$, distributed uniformly in the support interval, thus each user's $k^{(g)}$ of group $g$, ($1\leq k^{(g)} \leq K_g$ and $1\leq g \leq G$) transmitting antenna $n$ channel, ${\mathbf h}_{u,g,k,n} = \frac{1}{{\sqrt L}}\sum_{l=1}^L \beta_{lgkn} {\mathbf a}(\theta_{lgkn})$, where ${\mathbf a}(\theta_{lgkn}) = [1, \exp(-j{2\pi}D\cos(\theta_{lgkn})),\cdots, \exp(-j{2\pi D(N_u-1)}\cos(\theta_{lgkn}))]^T$ is the array response vector, where each $\theta_{lgkn}$ is independently selected and uniformly distributed in the group's angular support $[{\bar \theta}_{g} -\Delta \theta, {\bar \theta}_{g} +\Delta \theta]$ of its group, with $D=d/\lambda$ representing the normalized distance of successive array elements, $\lambda$ is the wavelength, $\theta_{lgkn}$ is the elevation (arrival) angle of the  $l$ path of group $g$ $k$ user's $n$ receiving antenna, and the path gains $\beta_{lgkn}$ are independent complex Gaussian random variables with zero mean and variance $1$. This channel model is similar to the one in \cite{Filippou}. Note that the channel model adopted here is not Gaussian if $L$ is relatively small \cite{Filippou}. The VCM representation, presented in \cite{Sayeed}, is formed by projecting the original channel ${\mathbf H_u}$ to the $N_u$ dimensional space formed by the $N_u\times N_u$ DFT matrix ${\mathbf F}_{N_u}$. For massive MIMO systems, i.e., when $N_u \gg 1$,
by projecting each group channel ${\mathbf H}_{g}$ on the DFT virtual channel space \cite{Sayeed}, we get
${\tilde {\mathbf H}}_{g,v} = {\mathbf F}_{N_u}^H {\mathbf H}_g$,
where
${\mathbf F}_{N_u}$ is the DFT matrix of order $N_u$. Since each group attains the same angular behavior, over all users and antennas in the group, only a few, consecutive elements of ${\tilde {\mathbf H}}_g$ are significant \cite{TK_EA_TCOM2}. This comes as a result of the fact that significant angular components need to be in the  main lobe of the response vector (see \cite{TK_EA_TCOM2} for details).
By employing a size $|{\cal S}_g|\times N_u$ selection matrix ${\mathbf S}_g$\footnote{A selection matrix ${\mathbf S}^T$ of size $k\times n$ with $k<n$ consists of rows equal to different unit row vectors ${\mathbf e}_i$ where the row vector element $i$ is equal to  $1$ in the $i$th position and is equal to $0$ in all other positions, with the specific ${\mathbf e}_i$ vectors employed defined by the desired selection. Such a matrix has the property that ${\mathbf S}^T {\mathbf S} = {\mathbf I}$.}
\begin{equation}
{{\mathbf H}_{g,v}}= {\mathbf S}_g^T{\tilde{\mathbf H}}_g={\mathbf S}_g^T{\mathbf F}_{N_u}^H {\mathbf H}_g, \label{virtual_ch}
\end{equation}
where the group $g$ virtual channel matrix ${{\mathbf H}_{g,v}}$ is a reduced size, $r_g\times N_{d,g}$, matrix, with $r_g = |{\cal S}_{g}|$ being the number of significant angular components in group $g$, due to the sparsity available in the angular domain. We can then write for the uplink group $g$ channel matrix ${\mathbf H}_g$,
\begin{equation}
{\mathbf H}_g = {\mathbf F}_{N_u}{\mathbf S}_g{\mathbf S}_g^T{\mathbf F}_{N_u}^H {\mathbf H}_g={\mathbf F}_{N_u, {\cal S}_g}{{\mathbf H}_{g,v}},
\end{equation}
where ${\mathbf F}_{N_u, {\cal S}_g}$ represents the selected columns of ${\mathbf F}_{N_u}$ due to its sparse representation in the angular domain.
Finally, due to non-overlapping of the support sets, i.e., ${\cal S}_n\cap_{m\neq n}{\cal S}_m =\emptyset$, we see that the system becomes approximately orthogonal inter-group wise, i.e., $\sum_{m\neq g}{\mathbf H}_{d,g }{\mathbf H}_{d,m }^H \approx 0$. Then,
\begin{equation}
\begin{split}
 {\mathbf y}_d =  \left[ \begin{array}{c} {\mathbf H}_{1,v}^H{\mathbf F}_{N_u, {\cal S}_1}^H \\
{\mathbf H}_{2,v}^H{\mathbf F}_{N_u, {\cal S}_2}^H \\
 \vdots\\
 {\mathbf H}_{G,v}^H{\mathbf F}_{N_u, {\cal S}_G}^H \end{array} \right]
 \left[ \begin{array}{c c c c} {\mathbf F}_{N_u, {\cal S}_1} &
{\mathbf F}_{N_u, {\cal S}_2} & \cdots &
{\mathbf F}_{N_u, {\cal S}_G} \end{array} \right]\\
\left[
\begin{array}{cccccc}
{\bf P}_1 & {\bf 0} & {\bf 0} & \cdots & {\bf 0} & {\bf 0}\\
{\bf 0} & {\bf P}_2 & {\bf 0} & \cdots & {\bf 0} & {\bf 0}\\
{\bf 0} & {\bf 0} & {\bf P}_3 & \cdots & {\bf 0} & {\bf 0}\\
\vdots & \vdots & \vdots & \ddots & \vdots & \vdots\\
{\bf 0} & {\bf 0} & {\bf 0} & \cdots &{\bf P}_{G-1} & {\bf 0}\\
{\bf 0} & {\bf 0} & {\bf 0} & \cdots &{\bf 0} & {\bf P}_G\\
\end{array}
\right]
 \left[ \begin{array}{c} {{\mathbf x}_{1}} \\
{{\mathbf x}_{2}} \\
 \vdots\\
 {{\mathbf x}_{G}} \end{array} \right]+ {\mathbf n}, \label{full_jsdm}
 \end{split}
\end{equation}
where for $1\leq g \leq G$, ${\mathbf H}_{g,v}^H$  is a size $N_{d,g}\times |{\cal S}_g|$ matrix, ${\mathbf F}_{N_u, {\cal S}_G}$ is a size $|{\cal S}_g| \times N_u$ matrix, ${\mathbf P}_{g}$ is a size $|{\cal S}_g|\times |{\cal S}_g|$ matrix, and ${\mathbf x}_g$ is the group $g$ downlink symbol vector of size $|{\cal S}_g|\times 1$. As shown in \cite{TK_EA_TCOM2}, the performance of the system is very good, however the MU-MIMO PGP-WG precoder needs cooperation among all receivers in the group in order to achieve this near-optimal performance. More explicitly, PGP-WG requires that each user in a JSDM-FA group has knowledge of the received data of other users in the group in order to decode his data. An alternative we considered in \cite{TK_EA_TCOM2} is to employ multiple carriers for each user, i.e., Combined Frequency and Space Division and Multiplexing (CFSDM), but at a significant cost of power efficiency. Another alternative way is to investigate how the right singular vector matrix on the downlink channel of a group can be eliminated in the transmission phase, i.e., an integration of a ZFP into the PGP-WG framework needs to be studied. In the sequel, we show that by employing a combined ZF-PGP approach, we can achieve the same power efficiency as the original PGP-WG of JSDM-FA per group at high $\mathrm{SNR}$, albeit without requiring any receiver cooperation, i.e., with much smaller complexity. The rationale for this is the fact that ZF precoding creates independent data streams to each receiving antenna. Furthermore, in \cite{TK_EA_TCOM2} it is shown that precoding using the VCM model representation is equivalent to precoding in the original model of (\ref{original}). We list here two important properties associated with JSDM-FA:
\begin{enumerate}
\item It is based on SCSI, i.e., on the elevation angles of the uplink channel, which are fixed over long periods of time, and
\item It offers an information lossless lower dimension representation of the original channel.
\end{enumerate}
\section{Combined ZF-PGP Robust Downlink Precoder}
\subsection{ZF-PGP for Near Optimal Independent Data Stream Distribution}
For this section, we assume without loss of generality that each user's UE comprises two receiving antennas and that the number of VCMBs available to a generic group $g$ ($1\leq g \leq G$) is an even number. These assumptions are made in order to facilitate the description of the concept in conjunction with PGP-WG \cite{TE_TWC}.
From (\ref{full_jsdm}), the receiving equation on the downlink of group $g$ is
\begin{equation}
{\mathbf y}_{d,g} = {\mathbf H}_{g,v}^H {\mathbf P}_g {\mathbf x}_g + {\mathbf n}_g,
\end{equation}
${\mathbf H}_{g,v}^H$ is the VCM group's downlink matrix of size $N_{d,g}\times |{\cal S}_g|$, ${\mathbf y}_{d,g}$ is the group's size $N_{d,g}$ reception vector, and ${\mathbf n}_g$ is the corresponding noise.
Based on the ZF precoding theory, assuming that $|{\cal S}_g| > N_{d,g}$, the ZF precoder is given by
\begin{equation}
{\mathbf P}_{g,ZF} = {\mathbf H}_{g,v}\left({\mathbf H}_{g,v}^H{\mathbf H}_{g,v} \right)^{-1},
\end{equation}
or by applying the Singular Value Decomposition (SVD)\footnote{As a convention, we always employ the SVD with singular values ranked from maximum to minimum going down the main diagonal, and thus the singular vectors follow the same ranking. We call this the natural SVD in the sequel.} of ${\mathbf H}_{g,v}={\mathbf U}_{g,v}{\boldsymbol \Sigma}_{g,v}{\mathbf V}_{g,v}^H$, where ${\mathbf U}_{g,v},~{\boldsymbol \Sigma}_{g,v},~{\mathbf V}_{g,v}$ are the matrices of left singular vectors, the singular values, and the right singular vectors of ${\mathbf H}_{g,v}$, respectively, we get
\begin{equation}
{\mathbf P}_{g,ZF} = {{\mathbf U}}_{g,v}\left[ \begin{array}{c}
    {\tilde{\boldsymbol \Sigma}}_{g,v}^{-1}\\
    {\bf 0}
    \end{array}
     \right]
     {\mathbf V}_{g,v}=\tilde{{\mathbf U}}_{g,v}{\tilde{\boldsymbol \Sigma}}_{g,v}^{-1}{\mathbf V}_{g,v},
\end{equation}
where ${\tilde{\boldsymbol \Sigma}}_{g,v},~{\tilde{\mathbf U}}_{g,v}$ represent the part of non-zero singular values of ${\boldsymbol \Sigma}_{g,v}$, and its corresponding columns of ${\mathbf U}_{g,v}$, respectively. However, this type of ZF precoder will be employing much more power than the original system without precoding. In other words, the transmitted power required in ZF precoding is $\mathrm{ tr}({\mathbf P}_{g,ZF}{\mathbf P}_{g,ZF}^H) = \sum_{i=1}^{N_{d,g}} \frac{1}{s_{g,v,i}^2}\neq N_{d,g}$, where ${s_{g,v,i}}$ is the $i$th non-zero singular value of ${\mathbf H}_{g,v}$. We thus need to normalize the precoder by $\gamma =\sqrt{\frac{N_{d,g}}{\mathrm{tr}(({\mathbf H}_{g,v}^H{\mathbf H}_{g,v})^{-1})}}= \sqrt{\frac{N_{d,g}}{\sum_{i=1}^{N_{d,g}} \frac{1}{s_{g,v,i}^2}}}$, resulting in the following receiving vector
\begin{equation}
{\mathbf y}_{d,g} =  \gamma{\mathbf x}_g + {\mathbf n}_g,
\end{equation}
which shows an $\mathrm{SNR}$ loss, due to the effect of smaller singular values. However, one can see an advantage of ZF precoding due to the creation of $N_{d,g}$ independent streams in the above equation. In this paper, besides comparing the new ZF-PGP precoder to the ZF one, we will also employ the following additional benchmark precoders: a) The RZF precoder (RZFP) \cite{Lee}, b) The Serial Sub-group Combining precoder (SSCP) which normalizes the overall power of the ZF precoder (ZFP) every two consecutive singular values, and c) Parallel Sub-group Combining precoder (PSCP) which normalizes the ZFP power by grouping pairs of extreme singular values. In the following development of the ZF-PGP precoder, we assume that the BS knows the channels perfectly. However, this perfect CSI at the BS assumption is made to allow for the highest ZF-type precoding performance, because when this is not true, the ZF-type precoders considered herein will suffer from cross-interference that will result in significant loss in performance. For the ZF-PGP though, since we have the downlink estimation issue explained below, we can get a sense of its overall performance even with estimated, erroneous CSI, by increasing the level of errors on the downlink alone. This is done in the next section.

The PGP-WG precoder can be determined as follows \cite{TE_TWC}. In group $g$ there are $|{\cal S}_g|$ VCMBs. We assume an even number of VCMBs without loss of generality. For ZF precoding, we need to assume that $N_{d,g}=2K_g \leq |{\cal S}_g|$ (in the opposite case we need to apply CFSDM to accommodate additional downlink users). PGP-WG develops an $N_{d,g}\times N_{d,g}$ near-optimal precoder as follows. It determines the downlink precoder by first employing SVD ${\mathbf P} = {\mathbf U}_{PGP}{\boldsymbol \Sigma}_{PGP}{\mathbf V}_{PGP}^H$. The left singular vector matrix,  ${\mathbf U}_{PGP}$, is determined from the $N_{d,g}$ first (largest) right eigenvectors of the group's downlink channel right singular vectors \cite{Xiao}, i.e.,
\begin{equation}
{\mathbf U}_{PGP} = {\mathbf U}_{g,v,N_{d,g}},
\end{equation}
where the subscript $N_{d,g}$ emphasizes the selection of the singular vectors corresponding to the largest (non-zero) singular vectors.
In PGP-WG, the precoder right singular vector matrix has a block diagonal structure, where each block has size $2\times2$. Thus, the input and output antennas are partitioned in $K_g$ subgroups each subgroup comprising two output antennas and two input symbols. Since different users need to receive different data symbols, we assign same user antennas to one user subgroup. The input symbols are grouped based on the non-zero group's channel singular values, with the optimal way being forming the singular value groups in the PGP by combining the most
distant (in value) singular values, then the best performance is achieved by the PGP \cite{TE_TWC}. Finally, in PGP,
$${\mathbf V}_{PGP}=\mathrm{diag}[{\mathbf V}_1 \cdots {\mathbf V}_{N_{d,g}}]=$$
$$
={\scriptstyle
 \left[ \begin{array}{c c c c c c } {\mathbf V}_{P_1} & {\mathbf 0} & {\mathbf 0}& \cdots & {\mathbf 0} &{\mathbf 0}\\
 {\mathbf 0} & {\mathbf V}_{P_2} & {\mathbf 0} & \cdots  & {\mathbf 0}&  {\mathbf 0} \\
 \vdots & \vdots & \vdots & \ddots & \vdots & \vdots\\
{\mathbf 0} & {\mathbf 0}&{\mathbf 0}&  \cdots & {\mathbf 0}  & {\mathbf V}_{P_{N_{d,g}}} \end{array} \right]
}
$$
is a unitary matrix, and the diagonal matrix
$${\boldsymbol \Sigma}_{PGP} = \mathrm{diag}[{\boldsymbol \Sigma}_1 \cdots {\boldsymbol \Sigma}_{N_{d,g}}]$$
satisfies $\sum_{i=1}^{N_{d,g}} \mathrm {tr}({\boldsymbol \Sigma}_{P_i}^2) = N_{d,g}$. Then, the PGP-WG approach performs a sequence of $K_g$, size $2\times2$ globally optimal precoding determinations, i.e., it solves the following $N_{d,g}$ optimization sub-problems, one for each partition $i$ subgroup ($i=1,2,\cdots,N_{d,g}$),
\begin{eqnarray}
\begin{aligned}
& \underset{{\mathbf P}_{i}}{\text{maximize}}
& & I({\mathbf x}_{s_i};{\mathbf y}_{s_i})\\
& \text{subject to}
& &  \mathrm {tr}({\boldsymbol \Sigma}_{P_i}^2 ) = 2,\\ \label{eq_PGP}
\end{aligned}
\end{eqnarray}
with ${\mathbf P}_i = {\boldsymbol \Sigma}_{P_i}{\mathbf V}_{P_i}$. Thus, PGP-WG needs to run $N_{d,g}$ globally optimal precoders of size $2\times2$, which was done very efficiently for QAM type modulations in the past \cite{TK_EA_QAM}. In addition, PGP-WG has been shown in \cite{LozanoA} to be near-optimal \cite{LozanoA}.
 \setcounter{theorem}{0}
 \begin{theorem}
    For the ZF-PGP precoder and under the JSDM-FA model, the optimal PGP-WG precoder is determined by re-weighting the optimal PGP powers of the VCM channel, due to the presence of the ZF part, while the right PGP-WG singular vectors remain the same as the original PGP-WG solution, i.e., the ones determined without considering a ZF precoder part.
    \end{theorem}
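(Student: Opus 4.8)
\emph{Proof approach.}~The plan is to work one group at a time and to build the ZF-PGP precoder as the cascade of the ZF precoder $\mathbf{P}_{g,ZF}$ with a block-diagonal inner precoder carrying the PGP degrees of freedom. First I would write the combined precoder as $\mathbf{P}_g = \mathbf{P}_{g,ZF}\mathbf{D}_g$, with $\mathbf{D}_g$ block-diagonal ($K_g$ blocks of size $2\times2$, one block per user's antenna pair), and substitute it into $\mathbf{y}_{d,g} = \mathbf{H}_{g,v}^H\mathbf{P}_g\mathbf{x}_g + \mathbf{n}_g$. Using the zero-forcing identity $\mathbf{H}_{g,v}^H\mathbf{P}_{g,ZF} = \mathbf{I}$, the end-to-end channel collapses to exactly $\mathbf{D}_g$. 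Because $\mathbf{D}_g$ is block-diagonal, this is the promised effective-channel representation: each user sees only its own $2\times2$ block, no inter-user interference survives, and the group mutual information splits as $I(\mathbf{x}_g;\mathbf{y}_{d,g}) = \sum_{i=1}^{K_g} I(\mathbf{x}_{s_i};\mathbf{y}_{s_i})$ with $\mathbf{y}_{s_i} = \mathbf{D}_i\mathbf{x}_{s_i} + \mathbf{n}_{s_i}$. This simultaneously establishes the independent-decoding property and reduces the design to choosing the $2\times2$ blocks $\mathbf{D}_i$.

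Next I would translate the single total-power constraint $\mathrm{tr}(\mathbf{P}_g\mathbf{P}_g^H) = N_{d,g}$ into the $\mathbf{D}_i$ variables. Since $\mathbf{P}_{g,ZF}^H\mathbf{P}_{g,ZF} = (\mathbf{H}_{g,v}^H\mathbf{H}_{g,v})^{-1}$, the power becomes $\mathrm{tr}\big(\mathbf{D}_g^H(\mathbf{H}_{g,v}^H\mathbf{H}_{g,v})^{-1}\mathbf{D}_g\big)$. The key algebraic step is that, because $\mathbf{D}_g$ is block-diagonal, only the $2\times2$ diagonal blocks $\mathbf{R}_{ii}$ of $(\mathbf{H}_{g,v}^H\mathbf{H}_{g,v})^{-1}$ survive the trace, so the constraint decouples as $\sum_i \mathrm{tr}(\mathbf{D}_i^H\mathbf{R}_{ii}\mathbf{D}_i) = N_{d,g}$; the off-diagonal coupling introduced by the right singular vectors $\mathbf{V}_{g,v}$ of the VCM channel drops out. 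A Lagrangian argument then splits the design into $K_g$ independent per-user problems, each maximizing $I(\mathbf{x}_{s_i};\mathbf{D}_i\mathbf{x}_{s_i}+\mathbf{n}_{s_i})$ under a weighted power budget $\mathrm{tr}(\mathbf{D}_i^H\mathbf{R}_{ii}\mathbf{D}_i) = p_i$.

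I would then reduce each per-user problem to the canonical PGP $2\times2$ problem by whitening. Writing the eigendecomposition $\mathbf{R}_{ii} = \mathbf{Q}_i\boldsymbol{\Lambda}_i\mathbf{Q}_i^H$ and substituting $\mathbf{D}_i = \mathbf{Q}_i\boldsymbol{\Lambda}_i^{-1/2}\mathbf{T}_i$ turns the weighted budget into the standard $\mathrm{tr}(\mathbf{T}_i^H\mathbf{T}_i) = p_i$, while the mutual information becomes that of the diagonal effective channel $\boldsymbol{\Lambda}_i^{-1/2}$ precoded by $\mathbf{T}_i$ (the unitary $\mathbf{Q}_i$ being absorbed at the receiver). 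This is exactly a PGP subproblem of the form (\ref{eq_PGP}), but with the original VCM singular-value gains replaced by the ZF-induced gains drawn from $\boldsymbol{\Lambda}_i^{-1/2}$. Invoking the established near-optimal $2\times2$ finite-alphabet precoder characterization used for PGP-WG, I would argue that the optimal input rotation $\mathbf{V}_{P_i}$ is unchanged, whereas the power split $\boldsymbol{\Sigma}_{P_i}$ is re-scaled by these ZF weights, which is exactly the claimed re-weighting of the PGP powers with invariant right singular vectors.

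The step I expect to be the main obstacle is this last one: rigorously showing that the input rotation is invariant under the ZF re-weighting. The difficulty is that the whitened gains are eigenvalues of the diagonal block $\mathbf{R}_{ii}$ rather than simply reciprocal VCM singular values, so I must show that what the optimal rotation actually depends on (the relative gain/SNR geometry of the $2\times2$ finite-alphabet problem) is preserved, and that only the magnitude allocation $\boldsymbol{\Sigma}_{P_i}$ absorbs the change. I would handle this by appealing to the structure of the optimal finite-alphabet precoder, in which the rotation is fixed by the constellation and the channel gain ratio while the power allocation carries the scaling, and by confirming in the high-$\mathrm{SNR}$ regime emphasized in the statement that the re-weighted allocation reproduces the original PGP-WG power efficiency.
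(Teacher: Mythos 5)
Your skeleton matches the paper's: cascade the ZF and PGP stages, use ${\mathbf H}_{d,g}^H{\mathbf P}_{g,ZF}={\mathbf I}$ to collapse the end-to-end channel to the block-diagonal PGP stage, observe that the only place the original channel survives is the power constraint $\mathrm{tr}\big({\mathbf D}_g^H({\mathbf H}_{g,v}^H{\mathbf H}_{g,v})^{-1}{\mathbf D}_g\big)=N_{d,g}$, and absorb that weighting into an effective channel so that the standard PGP-WG machinery applies. Where you genuinely diverge is in how the weighting is absorbed. The paper keeps the PGP stage in the form ${\mathbf P}_{g,PGP}=\tilde{\boldsymbol \Sigma}_{g,PGP}{\mathbf V}_{g,PGP}^H$ with ${\mathbf V}_{g,PGP}$ block-diagonal unitary, so that ${\mathbf P}_{g,PGP}{\mathbf P}_{g,PGP}^H=\tilde{\boldsymbol \Sigma}_{g,PGP}^2$ is diagonal and only the diagonal entries of ${\mathbf M}\doteq{\mathbf V}_{g,v}^H\tilde{\boldsymbol \Sigma}_{g,v}^{-2}{\mathbf V}_{g,v}$ enter the trace: the constraint becomes $\sum_m {\mathbf M}_{m,m}\,s_{PGP,m}^2=N_{d,g}$, the scalar weights $w_m=\sqrt{{\mathbf M}_{m,m}}$ come out in closed form (Appendix A), and the effective channel is simply $\mathrm{diag}(1/w_m)$ --- no eigendecomposition is needed, and even the within-block off-diagonal entries of ${\mathbf M}$ drop out (your phrase ``the off-diagonal coupling drops out'' is only true off the blocks in your parametrization, which is why you then have to whiten). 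Your substitution ${\mathbf D}_i={\mathbf Q}_i{\boldsymbol \Lambda}_i^{-1/2}{\mathbf T}_i$ diagonalizes each $2\times2$ block ${\mathbf R}_{ii}$ instead; it is mathematically sound and in fact optimizes over a strictly larger precoder class (the per-block left unitary ${\mathbf Q}_i$ is harmless for mutual information but enlarges the feasible set of the weighted power constraint), at the price of effective gains given by the eigenvalues of ${\mathbf R}_{ii}$ rather than its diagonal entries, and of a final precoder that is no longer of the (diagonal)$\times$(block-unitary) form the theorem describes. So you prove a valid, arguably slightly stronger, effective-channel reduction, but not literally the ``re-weight the powers, keep the right singular vectors'' claim with the paper's $w_m$. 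Finally, the obstacle you single out --- rigorous invariance of the $2\times2$ input rotation under the re-weighting --- is not actually discharged in the paper either: the proof only shows the problem retains the PGP-WG form for the effective channel and then re-runs the PGP-WG algorithm, so ``the right singular vectors remain the same'' is to be read as the same block-diagonal structure produced by the same subproblems, not numerical invariance of each ${\mathbf V}_{P_i}$; you do not need to prove the latter.
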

\begin{proof}
The downlink received vector equation for group $g$ ($1\leq g\leq G$) is
\begin{equation}
{\mathbf y}_{d,g}={\mathbf H}_{d,g}^H{\mathbf P}_{g,ZF}{\mathbf P}_{g,PGP}{\mathbf x}_g+{\mathbf n}_g, \label{eq_combo}
\end{equation}
where ${\mathbf P}_{g,ZF}$, ${\mathbf P}_{g,PGP}$ represent the ZF and PGP precoders, respectively. Here, the ZF part is employed in order to equalize the MIMO channel. It is well-known that a mutual information maximizing linear precoder only depends on the singular values of the channel \cite{Xiao}.
    We use the fact that the ZFP part satisfies ${\mathbf H}_{d,g}^H{\mathbf P}_{g,ZF}={\mathbf I}_{N_{d,g}}$. Then, the receiving equation for all antennas after introducing the PGP part can be written as
    \begin{eqnarray}
    \begin{aligned}
{\mathbf y}_{d,g}&=&{\mathbf P}_{g,PGP}{\mathbf x}_g+{\mathbf n}_g. \label{eq_PGP_MOD}
\end{aligned}
\end{eqnarray}
    Using SVD the expressions for ${\mathbf P}_{ZF},~{\mathbf P}_{g,PGP}$ can be written as ${\mathbf P}_{g,ZF} = {\mathbf U}_{g,v}\left[ \begin{array}{c}
    {\tilde{\boldsymbol \Sigma}}_{g,v}^{-1}\\
    {\bf 0}
    \end{array}
     \right]
     {\mathbf V}_{g,v}{\tilde{\boldsymbol \Sigma}}_{g,v},$ and ${\mathbf P}_{g,PGP} = {\tilde{\boldsymbol \Sigma}}_{g,PGP}{\mathbf V}_{g,PGP}^H$, respectively. Thus,
     \begin{equation}
     {\mathbf P}_{g,ZF}{\mathbf P}_{g,PGP} = {\mathbf U}_{g,v} \left[ \begin{array}{c}
   {\tilde{\boldsymbol \Sigma}}_{g,PGP}\\
    {\bf 0}
    \end{array}
     \right]
     {\mathbf V}_{g,PGP}^H={\mathbf U}_{g,v,N_{d,g}}{\tilde{\boldsymbol \Sigma}}_{g,PGP}{\mathbf V}_{g,PGP}^H,
     \end{equation}
     where in order to simplify notation, we defined ${\mathbf U}_{g,v,N_{d,g}}$ to represent the first $N_{d,g}$ highest singular vectors of ${\mathbf U}_{g,v}$.
     Note that then the power constraint for the overall precoder that comprises both the ZF and PGP parts becomes $\mathrm {tr}({\mathbf P}_{g,ZF}{\mathbf P}_{g,PGP}{\mathbf P}_{g,PGP}^H{\mathbf P}_{g,ZF}^H)=N_{d,g}$, or equivalently, after using properties of the matrix trace, we can write for the overall precoder power constraint
     \begin{equation}
     \mathrm {tr}\left({\mathbf V}_{g,v}^H{\tilde{\boldsymbol \Sigma}}_{g,v}^{-2}{\mathbf V}_{g,v}{\tilde{\boldsymbol \Sigma}}_{g,PGP}^2\right)=N_{d,g} \label{eq_power}.
     \end{equation}
    Let us define ${\mathbf M}\doteq {\mathbf V}_{g,v}^H{\tilde{\boldsymbol \Sigma}}_{g,v}^{-2}{\mathbf V}_{g,v}$, then equivalently
    \begin{equation}
     \mathrm {tr}\left({\mathbf M}{\tilde{\boldsymbol \Sigma}}_{g,PGP}^2\right)=N_{d,g} \label{eq_power2}.
     \end{equation}
     We see that due to the ZF precoder part, the overall precoding power constraint becomes involved. In addition, due to the presence of small singular values in the channel, especially in correlated channels, as the ones present here, there will be a significant power loss in the power constraint described by (\ref{eq_power}). Now, since ${\tilde{\boldsymbol \Sigma}}_{g,PGP}$ is a diagonal matrix, we can write (\ref{eq_power2}) as
     \begin{equation}
     \sum_{m=1}^{N_{d,g}} {\mathbf M}_{m,m} s_{PGP,m}^2 = N_{d,g}. \label{eq_power3}
     \end{equation}
     On the other hand, the performance of a mutual information maximizing precoder is determined by ${\tilde{\boldsymbol \Sigma}}_{g,PGP}~\text{and}~{\mathbf V}_{g,PGP}$ according to (\ref{eq_PGP_MOD}). Incorporating the new re-weighted constraints of (\ref{eq_power3}) into the PGP-WG frameork presented above is feasible, but it would result in additional complexity. However, using $w_m \doteq
     \sqrt{|{\mathbf M}_{m,m}|} $, for $m=1,2,\cdots N_{d,g}$, which are calculated in closed form in Appendix A, a simplification is possible by casting the power constraint on the PGP part as an ordinary one. By defining ${\boldsymbol \Sigma}_{g,v,eff} = \mathrm{diag}[s_{v,g,eff,1}\cdots s_{g,v,eff,N_{d,g}}]$, ${\boldsymbol \Sigma}_{g,PGP,eff} = \mathrm{diag}[s_{g,PGP,eff,1}\cdots s_{g,PGP,eff,N_{d,g}}]$, where $s_{g,v,eff,m}\doteq \frac{1}{w_m}$, $s_{g,PGP,eff,m}\doteq {s_{PGP,m}}{w_m}$, respectively, for $m=1,2,\cdots N_{d,g}$, we have the equivalent model fitting within the original PGP-WG type power constraint as follows
     \begin{equation}
     {\mathbf y}_{d,g}={\tilde{\boldsymbol \Sigma}}_{g,v,eff}{\mathbf P}_{g,PGP,eff}{\mathbf x}_g+{\mathbf n}_g={\tilde{\boldsymbol \Sigma}}_{g,v,eff}{\boldsymbol \Sigma}_{g,PGP,eff}{\mathbf V}_{g,PGP}^H{\mathbf x}_g+{\mathbf n}_g, \label{ch_eff}
     \end{equation}
     where the optimal precoder needs to solve
    \begin{eqnarray}
\begin{aligned}
& \underset{{\boldsymbol \Sigma}_{g,PGP,eff},{\mathbf V}_{g,PGP} }{\text{maximize}}
& & I({\mathbf x_d};{\mathbf y_d})\\
& \text{subject to}
& &  \mathrm {tr}({\boldsymbol \Sigma}_{g,PGP,eff}^2) = N_{d,g}. \\ \label{eq_MODIFIED}
\end{aligned}
\end{eqnarray}
   Thus, the optimization is within the framework of PGP-WG type of problems. Then, one can readily apply the original PGP-WG algorithm to solve    (\ref{eq_MODIFIED}).
   \end{proof}
   Based on the previous theorem proof, it is straightforward to prove the following corollary.
    \setcounter{theorem}{0}
   \begin{corollary}
   The ZF-PGP precoder is equivalently created by using an effective channel singular value matrix given by ${\boldsymbol \Sigma}_{g,v,eff}$, followed by an ordinary PGP-WG precoder.
   \end{corollary}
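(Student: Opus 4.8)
The plan is to obtain the corollary as an immediate re-reading of the effective-channel identity (\ref{ch_eff}) and the modified optimization (\ref{eq_MODIFIED}) established in the theorem, by recognizing the two factors in (\ref{ch_eff}) as an effective diagonal channel followed by an unmodified PGP-WG precoder. First I would note that ${\boldsymbol \Sigma}_{g,v,eff}=\mathrm{diag}[s_{g,v,eff,1}\cdots s_{g,v,eff,N_{d,g}}]$ with $s_{g,v,eff,m}=1/w_m$ is a real, nonnegative, diagonal matrix; hence it is a legitimate singular-value matrix of a fictitious channel whose left and right singular-vector matrices are the identity. Reading (\ref{ch_eff}) as ${\mathbf y}_{d,g}={\boldsymbol \Sigma}_{g,v,eff}\,{\mathbf P}_{g,PGP,eff}{\mathbf x}_g+{\mathbf n}_g$ then exhibits ${\boldsymbol \Sigma}_{g,v,eff}$ in exactly the role played by the true VCM channel in the plain PGP-WG receiving equation.

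Next I would verify that the remaining factor ${\mathbf P}_{g,PGP,eff}={\boldsymbol \Sigma}_{g,PGP,eff}{\mathbf V}_{g,PGP}^H$ has precisely the canonical PGP-WG form. Its right singular-vector matrix ${\mathbf V}_{g,PGP}$ retains the block-diagonal $2\times2$ unitary structure of the original PGP-WG solution, which the theorem already showed is left unchanged by the ZF stage, and its power is governed by the ordinary trace constraint $\mathrm{tr}({\boldsymbol \Sigma}_{g,PGP,eff}^2)=N_{d,g}$ appearing in (\ref{eq_MODIFIED}). Consequently (\ref{eq_MODIFIED}) is, term by term, the standard PGP-WG optimization associated with the effective channel ${\boldsymbol \Sigma}_{g,v,eff}$.

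To close the argument I would invoke the fact, already used in the theorem and cited from \cite{Xiao}, that a mutual-information-maximizing linear precoder depends on the channel only through its singular values. Since the effective channel in (\ref{ch_eff}) is diagonal with singular values $\{s_{g,v,eff,m}\}$, feeding these values into the unmodified PGP-WG algorithm returns exactly the optimizing pair $({\boldsymbol \Sigma}_{g,PGP,eff},{\mathbf V}_{g,PGP})$ of (\ref{eq_MODIFIED}); composing this precoder with the identity singular-vector structure of the effective channel reproduces (\ref{ch_eff}). This establishes that the composite ZF-PGP precoder is equivalently realized by running ordinary PGP-WG on the effective singular-value matrix ${\boldsymbol \Sigma}_{g,v,eff}$.

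The only point requiring care is the bookkeeping of the relabeling $s_{g,PGP,eff,m}=s_{PGP,m}w_m$ with $w_m=\sqrt{|{\mathbf M}_{m,m}|}$: one must check that this single substitution simultaneously turns the re-weighted constraint (\ref{eq_power3}) into the ordinary constraint of (\ref{eq_MODIFIED}) and leaves the effective gain $s_{g,v,eff,m}=1/w_m$ consistent with (\ref{ch_eff}), so that the $w_m$ factors absorbed into the precoder are exactly cancelled by the $1/w_m$ gains of the effective channel and no power is lost or double-counted. Everything else is a direct transcription of the theorem's conclusion, so the corollary follows without further computation.
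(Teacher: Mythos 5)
Your proposal is correct and follows essentially the same route as the paper, which simply observes that the corollary is an immediate restatement of the effective-channel identity (\ref{ch_eff}) together with the ordinary trace constraint in (\ref{eq_MODIFIED}) established in Theorem~1. Your added bookkeeping check that the $w_m$ factors absorbed into ${\boldsymbol \Sigma}_{g,PGP,eff}$ are exactly cancelled by the $1/w_m$ gains of ${\boldsymbol \Sigma}_{g,v,eff}$ is a useful elaboration of what the paper leaves implicit, but it is not a different argument.
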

   This corollary demonstrates that the actual ZF-PGP precoder is determined in a straightforward manner by using an effective channel ${\boldsymbol \Sigma}_{g,v,eff}$ and then by determining its corresponding WG-PGP. The diagonal effective channel singular value matrix is determined by the inverses of the $w_m$, $m=1,2,\cdots,N_{d,g}$, which are given by (see Appendix A) $w_m=\sqrt{|{\mathbf M}_{m,m}|}=\sqrt{\sum_{m'=1}^{N_{d,g}}\frac{|({\mathbf V}_{g,v})_{m,m'}|^2}{s_{g,v,m'}^2}}$.

Note that although a ZF-PGP precoder can be developed in a quite straightforward way based on Theorem 1 and Corollary 1, there is one essential question remaining on its performance. Regarding the performance advantages of ZF-PGP, we need to stress the following facts. For uncorrelated MIMO channels, i.e., when full independence exists between the ${\mathbf H}_{g,v}$ entries and with the condition number of the channel being relatively low \cite{Tse} , then a ZF-PGP precoder does not perform well, while a ZF one performs extremely well. The original pioneering works of \cite{Swindle}, \cite{Shamai}, and others have addressed independent Gaussian channels and capitalized on the very desirable property of the ZF precoder to offer separate data streams to each user. When a MIMO channel presents correlations, for example by using a channel model for ULA herein and with a high condition number, due to massive MIMO, ZF and RZF gains evaporate rapidly and a PGP-WG precoder performs much better as demonstrated in the next section. In addition, when the number of UEs becomes close to the number of available VCMBs in the group $|{\cal S}_g|$, although a PGP-WG precoder achieves excellent performance \cite{TK_EA_TCOM2}, it requires a joint decoding at the receiving UEs, and thus it becomes impractical. Under these conditions, by applying CFSDM in conjunction with a ZF-PGP outer MU-MIMO precoder part, one can achieve excellent performance with simplified receiver operation due to the independent data streams created by the inner ZF MU-precoder part. This is demonstrated in the next section with numerical results.

\subsection{Mismatched Downlink Decoding}
We focus on errors on the downlink of the system with each UE being sent pilot data using the optimized ZF-PGP precoder the BS has already determined, so that the UE can estimate the employed precoder, then use the precoder estimate in order to decode the received information data. This is due to the fact that the UE needs to know the optimal precoder determined by the BS in order to decode the data. This estimation process contains errors. In order to assess the system's robustness to this kind of errors, we employ a more general method used extensively in the literature \cite{Caire_Kam, Schoeber_THP} to model the estimation errors. The applied technique helps decouple the employed estimation technique from its corresponding performance, as explained in Section IV. Based on the imperfect channel estimate, the subgroup receiver determines the precoder and employs it at the decoding process. Due to the estimation errors, this results in a mismatch in the UE decoding process. We calculate the corresponding mismatched mutual information with the downlink receiver operating under mismatch, due to the erroneous estimate and assess the corresponding performance loss. We present our methodology toward that end below.

Let us consider the downlink mismatched precoded information decoding under ZF-PGP. Let us denote by ${\mathbf P},~ {\hat {\mathbf P}},~p_{{\mathbf P},{\mathbf y}}({\mathbf y}|{\mathbf x}),~p_{{\hat{\mathbf P}},{\mathbf y}}({\mathbf y}|{\mathbf x}),$ the optimal precoder employed on the downlink, the estimation-based precoder, the pdf of the actual downlink PGP-WG subgroup channel, and the pdf of the estimated downlink PGP-WG subgroup channel, respectively. Then, the mismatched mutual information, $I_{msm}({\mathbf x};{\mathbf y})$, on the subgroup downlink under the presented mismatch can be found employing the results in \cite{Fischer} and using\footnote{The expression used for mutual information under mismatch gives accurate results only if the support of the estimated channel pdf is a superset of the support of the actual channel pdf \cite{Fischer}. If this condition is not satisfied, then negative or very high values for the mismatched mutual information could result which in reality means that the achieved mutual information is zero.}
\begin{equation}
\begin{split}
I_{msm}({\mathbf x};{\mathbf y})&= -\frac{1}{M^{N_t}}\sum_{{\mathbf x}_m}{\mathbb E}_{{\mathbf P},{\mathbf y}|{\mathbf x}_m}\left\{ \log \left(
  \sum_{{\mathbf x}_k}\frac{1}{M^{N_t}} \right. \right.\\
  & \left. \left. \times \frac{1}{\pi^{N_r} {\sigma}^{2N_r}}\exp (-\frac{1}{\sigma^2}||{\mathbf y}-{\hat {\mathbf P}}{\mathbf x}_k||^2)\right)\right\}\\
  &+\frac{1}{M^{N_t}}\sum_{{\mathbf x}_m}{\mathbb E}_{{\mathbf P},{\mathbf y}|{\mathbf x}_m}\left\{\log\left(\frac{1}{\pi^{N_r} {\sigma}^{2N_r}}\exp (-\frac{1}{\sigma^2}||{\mathbf y}-{\hat {\mathbf P}}{\mathbf x}_k||^2)\right)\right\},
\end{split}
\end{equation}
where ${\mathbf P}={\tilde{\boldsymbol \Sigma}}_{g,v,eff}{\boldsymbol \Sigma}_{g,PGP,eff}{\mathbf V}_{g,PGP}$, ${\hat{\mathbf P}}={\hat{\tilde{\boldsymbol \Sigma}}}_{g,v,eff}{\hat{\boldsymbol \Sigma}}_{g,PGP,eff}{\hat{\mathbf V}}_{g,PGP}$, respectively, and where the metrics within the argument of the $\log$ function reflect the fact that estimated channels are employed and in our case $N_r=2$ receiving antennas are employed by each subgroup.
Realizing that under ${{\mathbf P},{\mathbf y}|{\mathbf x}_m}$, ${\mathbf y}= {\mathbf P}{\mathbf x}_m+{\mathbf n},$ and after some straightforward simplifications, we can easily find that
\begin{equation}
\begin{split}
I_{msm}({\mathbf x};{\mathbf y})&= N_t\log(M)-{N_r} -\frac{1}{M^{N_t}}\sum_{{\mathbf x}_m}{\mathbb E}_{\mathbf n}\left\{ \log \left(
  \sum_{{\mathbf x}_k} \right. \right.\\
  & \left. \left. \times \exp (-\frac{1}{\sigma^2}||{\mathbf n}+{\mathbf P}{\mathbf x}_m-{\hat {\mathbf P}}{\mathbf x}_k||^2)\right)\right\}-
  \frac{1}{\sigma^2}||{\mathbf P}-{\hat {\mathbf P}}||_F^2, \label{eq_msm}
\end{split}
\end{equation}
where $||{\mathbf P}-{\hat {\mathbf P}}||_F^2 =\mathrm {tr}\left( ({\mathbf P}-{\hat {\mathbf P}})({\mathbf P}-{\hat {\mathbf P}})^H\right)$ is the Frobenius norm square of ${\mathbf P}-{\hat {\mathbf P}}$.
The third term in the above equation can be very accurately approximated by employing the Gauss-Hermite approximation in a fashion similar to \cite{TK_EA_QAM}, with a simple modification to the expression presented there, i.e., substituting ${\mathbf P}{\mathbf x}_m-{\hat {\mathbf P}}{\mathbf x}_k$ for ${\mathbf P}({\mathbf x}_m-{\mathbf x}_k)$. The details are omitted here due to lack of space. The final expression upon employing the Gauss-Hermite approximation is presented in Appendix B.

We observe that due to the channel estimation induced mismatch, the mutual information expression experiences two impacts: First, it has a modified term in the summation of the expectations over ${\mathbf n}$ that reflect the mismatch between the actual channel and its estimation, and second, it contains a Frobenius norm square of the MSE of the estimation, normalized by $\mathrm {SNR}_{s,d}$. This second term shows that there is a limit of how well an increase of $\mathrm {SNR}_{s,d}$ will compensate for estimation errors.

For ZF-PGP to be efficient, very accurate CSI is assumed available at the BS for the reasons explained above. However, in 5G due to the concept of Cloud-RAN (C-RAN) there are multiple reasons that open the possibility of additional error sources to be present. For example, although an accurate channel estimate needs to be present at the BS, the ZF-PGP optimal precoder determination might take place at the C-RAN and employing cheap, inaccurate processors, or the additional propagation delay between the C-RAN and the BS might result in a suboptimal PGP-WG precoder. In order to assess the system's robustness to this kind of error, we employ a more general method used extensively in the literature to model the estimation errors  \cite{Caire_Kam, Schoeber_THP}. The applied technique helps decouple the employed estimation technique from its corresponding performance, as explained in Section IV. Based on the imperfect channel estimate, the subgroup receiver determines the precoder and employs it at the decoding process. Due to the estimation errors, this results in a mismatch and then we calculate the corresponding mismatched mutual information with the downlink receiver operating under mismatch, due to the erroneous estimate and assess the corresponding performance loss.

\section{Numerical Results}
In this section, we present numerical results regarding the precoders presented above and their comparison with other widely used precoders which we use as benchmarks. More explicitly, we use the ZFP, SSCP, PSCP, and the RZFP as benchmarks. In addition, we compare the performance of the ZF-PGP with PGP-WG \cite{TK_EA_TCOM2}. In all cases, there are group channels created based on the JSDM-FA model \cite{TK_EA_TCOM2} which we employ to get the desired results using the methodology presented in \cite{TK_EA_QAM}. We only show results for QAM with $M=16$, however the same approach we used in \cite{TK_EA_QAM,TK_EA_TCOM2} for higher modulation size of $M=64$ can be used here as well. The model used for CSI errors on the downlink is as follows. In all cases we use the Virtual Additional Antennas Concept (VAAC) that we have used in previous work \cite{TK_EA_QAM,TK_EA_TCOM2}. When results with mismatched decoding at the receivers are used, the following model applies. The BS determines the near-optimal precoder for this subgroup and sends this to subgroup $s_g$ ($1
\leq s_g \leq K_g)$, $1 \leq g_s \leq \frac{N_{d,g}}{2}$ of the JSDM-FA model using the ZF-PGP approach as
\begin{equation}
{\mathbf Y}_{s_g} = {\boldsymbol \Sigma}_{s_g}{\mathbf P}_{s_g,o}{\mathbf X}_{p_{s_g}}+{\mathbf N}_{s_g},
\end{equation}
where ${\mathbf Y}_{g_s},~{\boldsymbol \Sigma}_{g_{v,g_s}}, ~ {\mathbf P}_{g_s,o},~{\mathbf X}_{p_{g_s}},~\text{and}~{\mathbf N}_{g_s}$ represent the received data, the channel singular values for the PGP-WG subgroup, the ZF-PGP determined precoder for the PGP-WG subgroup, the pilots, and the noise, respectively, all for subgroup $g_s$. In order to model the errors of the estimation process we use for the estimate of ${\mathbf P}_{g_s,o}$, ${\hat{\mathbf P}}_{g_s,o}$, the following model \cite{Caire_Kam, Schoeber_THP}:
\begin{equation}
{\hat{\mathbf P}}_{s_g,o} = \sqrt{1-\tau^2}{\mathbf P}_{g_s,o}+\tau{\mathbf N}_{est},
\end{equation}
with ${\mathbf N}_{est}$ representing the random errors being a $2\times2$ matrix of complex, independent, Gaussian random variables of mean 0 and variance 1, and with $\tau$ being an estimation quality parameter in $[0,1]$ with 0 representing ideal estimation and 1 representing fully erroneous estimation. We use $\tau=0.1,~0.2,\text{and}~ 0.3$ in the results in this paper. For cases where $\tau \geq 0.2$ is employed, we can make the argument that due to the higher error level, one can model possible channel estimation errors as well and use the corresponding results as showing robustness to both type of errors (at the BS and UE). The groups presented are developed from a massive MIMO system with $N_u=100$ antennas using the JSDM-FA model presented in \cite{TK_EA_TCOM2}, according to the relevant equations presented in Section II. When no comparison is made with the actual channel GCMI in the results, we employ the symbol $\mathrm{SNR}$ as the resource variable, while when comparisons are made to the actual channel GCMI in the results, we employ as resource the information bit based $\mathrm{SNR}$, denoted as $\mathrm{SNR}_b$.

In Fig. 1 we present results using group $G1$ of \cite{TK_EA_TCOM2} for the achievable mutual information with PGP-WG, ZFP, SSCP, PSCP, and RZF, all with ideal CSI. The resulting downlink channel is a $4\times8$ MIMO channel, i.e., $N_{d,g}=4,~|{\cal S}_g|=8$, using an angular spread of $\pm 4^\circ$. We see what is typical behavior in correlated channels, i.e., PGP-WG offers much higher mutual information than all the others, albeit at the cost of requiring demodulation cooperation among the UE. It thus becomes appealing to consider precoding techniques that aim to bridge the gap between ZFP and PGP-WG type of precoding techniques. This channel presents very high correlation and ZF-PGP cannot offer an efficient solution. Instead, we resort to CFSDM in order to reduce the channel correlation and present
\begin{figure}[h]
\centering
\setcounter{figure}{0}
\includegraphics[height=3.6in,width=5.25in]{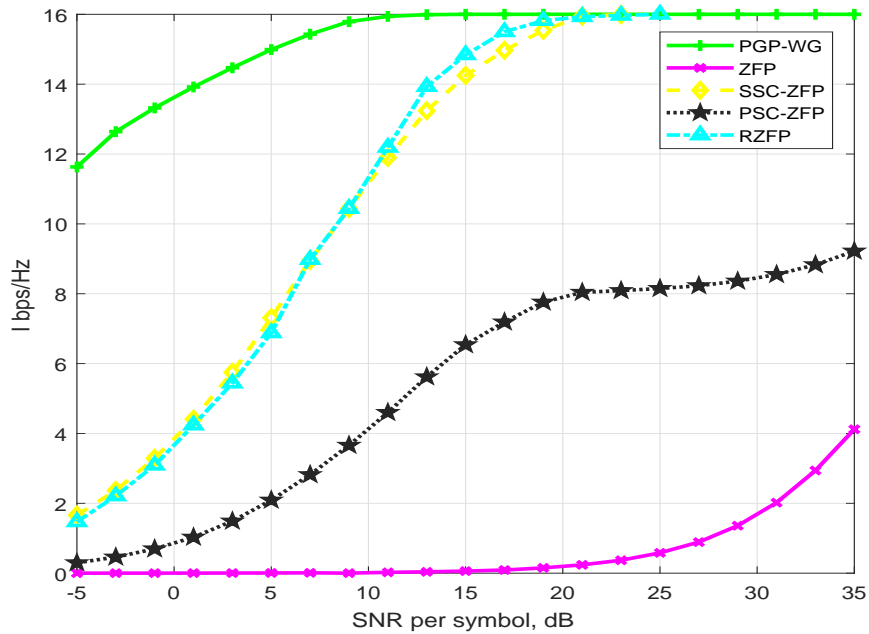}
	\caption{$I({\mathbf x};{\mathbf y})$ results for PGP-WG, ZFP, SSCP, PSCP, and RZFP cases for the channel in $G_1$ of \cite{TK_EA_TCOM2} in conjunction with QAM $M=16$ modulation.}
\end{figure}
our first ZF-PGP results in Fig. 2. Here we present results for the same channel in Fig. 1, but employing two different subcarriers, in a CFSDM fashion, resulting in two CFSDM groups of 1 UE each. We use perfect CSI and employ the Virtual Additional Antenna Concept (VAAC) for the ZF-PGP (explained in Appendix C) with $N_{TV}=2$ additional transmitting antennas added per CFSDM group. We divide the sum of the two CFSDM groups mutual information by two (the number of subcarriers employed in CFSDM) in order to calculate the resulting spectral efficiency. We see that ZF-PGP significantly outperforms ZFP, but due to the exploitation of two subcarriers, the performance of ZF has been improved from the non-CFSDM case dramatically. In addition, the performance of ZF-PGP approaches the performance of PGP-WG at high $\mathrm{SNR}$.
\begin{figure}[h]
\centering
\setcounter{figure}{1}
\includegraphics[height=3.6in,width=5.25in]{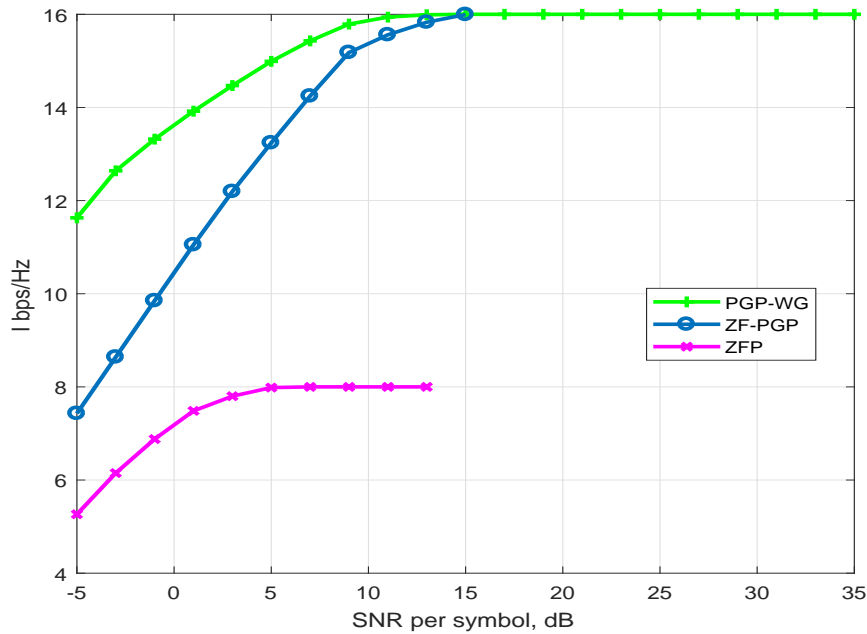}
	\caption{$I({\mathbf x};{\mathbf y})$ results for ZF-PGP, PGP-WG, and ZFP cases for the channel in $G_1$ of \cite{TK_EA_TCOM2} in conjunction with QAM $M=16$ modulation and CFSDM.}
\end{figure}

In Fig. 3 we present results for another small group. Here the group is created based on the JSDM-FA framework \cite{TK_EA_TCOM2} and comprises $N_{d,g}=4,~|{\cal S}_g|=6$, the group was developed with an angular spread of $\pm 10^\circ$. The results presented for ZF-PGP take advantage of the Virtual Additional Antenna Concept (VAAC) that can be very efficiently employed by ZF-PGP and PGP-WG, but it is not possible for the other techniques presented. The example shown uses $N_{TV} = N_{d,g}=4$ added at the transmitter based on VAAC. We observe the excellent performance of ZF-PGP, even with estimation errors. In addition, we see that in high $\mathrm{SNR}$, ZF-PGP follows PGP-WG very closely. The gain of ZF-PGP over ZFP with $\tau=0.1$ level of errors are 42.8\%, and 75\%, at $\mathrm{SNR}_s=5~dB$, and $15~dB$, respectively.

In Fig. 4 we show results for a large group, i.e., $N_{d,g}=30,~|{\cal S}_g|=14$, created with an angular spread of $\pm 4^\circ$ with the JSDM-FA framework. Because $N_{d,g}> |{\cal S}_g|$ in this case, CFSDM is required in order to implement the ZFP type of precoders. We divide the UE in three CFSDM subgroups each comprising 10 antennas. We present results for the third subgroup. The PGP-WG and ZF-PGP systems use $N_{TV}=10$. We observe the excellent performance of ZF-PGP, even with estimation errors. In addition, we see that in high $\mathrm{SNR}$, ZF-PGP follows PGP-WG very closely. The gain of ZF-PGP with $\tau=0.1$ level of errors over ZFP are 42.8\%, and 75\%, at $\mathrm{SNR}_s=5~dB$, and $15~dB$, respectively.
\begin{figure}[h]
\centering
\setcounter{figure}{2}
\includegraphics[height=3.6in,width=5.25in]{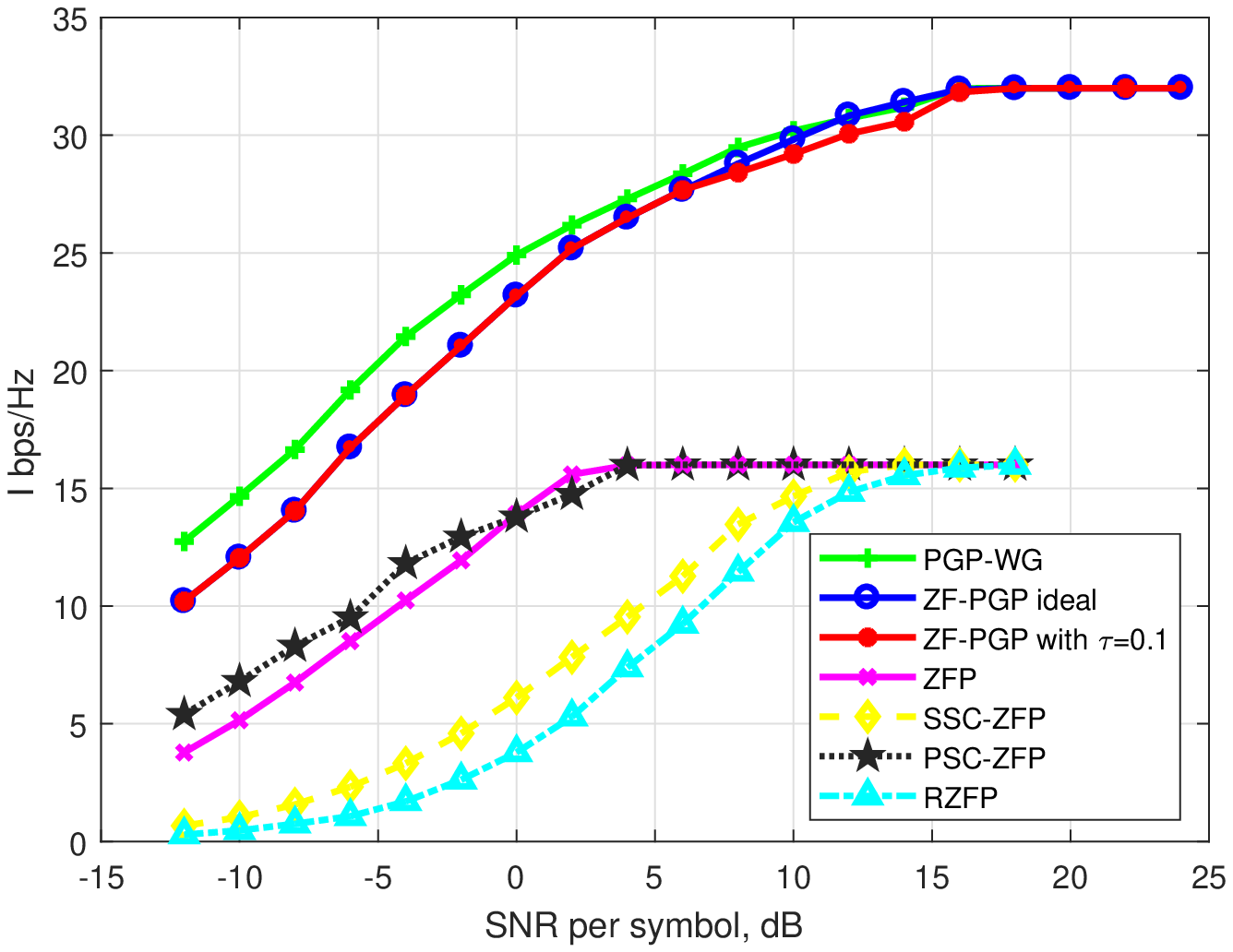}
	\caption{$I({\mathbf x};{\mathbf y})$ results for ZF-PGP (ideal and with channel estimation errors ($\tau=0.1$)), PGP-WG, ZFP, SSCP, PSCP, and RZFP cases for the $4\times6$ group channel in conjunction with QAM $M=16$ modulation.}
\end{figure}
In Fig. 5 we show results for a system with $N_{d,g}=40,~|{\cal S}_g|=12$, created with an angular spread of $\pm 4^\circ$ with the JSDM-FA framework. Here, again CFSDM is needed. Thus, we divide the groups into four equal size subgroups and simulate the first subgroup. The VAAC employs $N_{TV}=10$. The resulting gain of ZF-PGP with $\tau=0.1$ level of errors over ZFP at $\mathrm{SNR}_s=10~dB$ is greater than $50\%$. We observe that the estimation errors have a more profound effect in high $\mathrm{SNR}$ in this case, resulting in a reversing effect as the achievable spectral efficiency peaks in value, then it starts decreasing. This is a typical behavior under the model used, as the estimation errors are independent than the $\mathrm{SNR}$ in the model. This can be the case, for example, if the downlink channel estimation employs a constant estimation $\mathrm{SNR}$, independent of the data transmission $\mathrm{SNR}$.
\begin{figure}[h]
\centering
\setcounter{figure}{3}
\includegraphics[height=3.6in,width=5.25in]{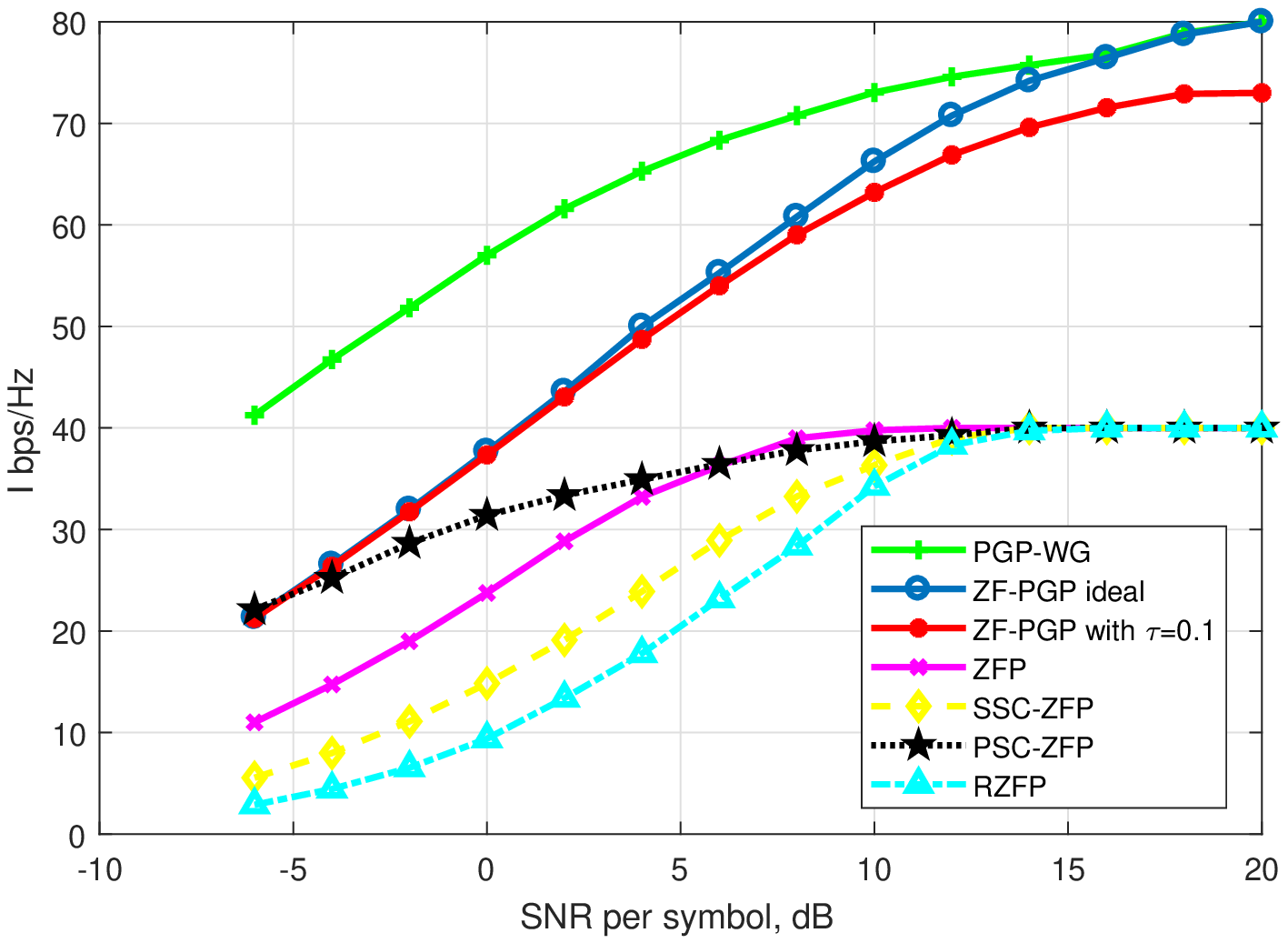}
	\caption{$I({\mathbf x};{\mathbf y})$ results for ZF-PGP (ideal and with channel estimation errors ($\tau=0.1$)), ZFP, SSCP, PSCP, and RZFP cases for the $30\times14$ CFSDM subgroup 3 channel in conjunction with QAM $M=16$ modulation and CFSDM.}
\end{figure}

\begin{figure}[h]
\centering
\setcounter{figure}{4}
\includegraphics[height=3.6in,width=5.25in]{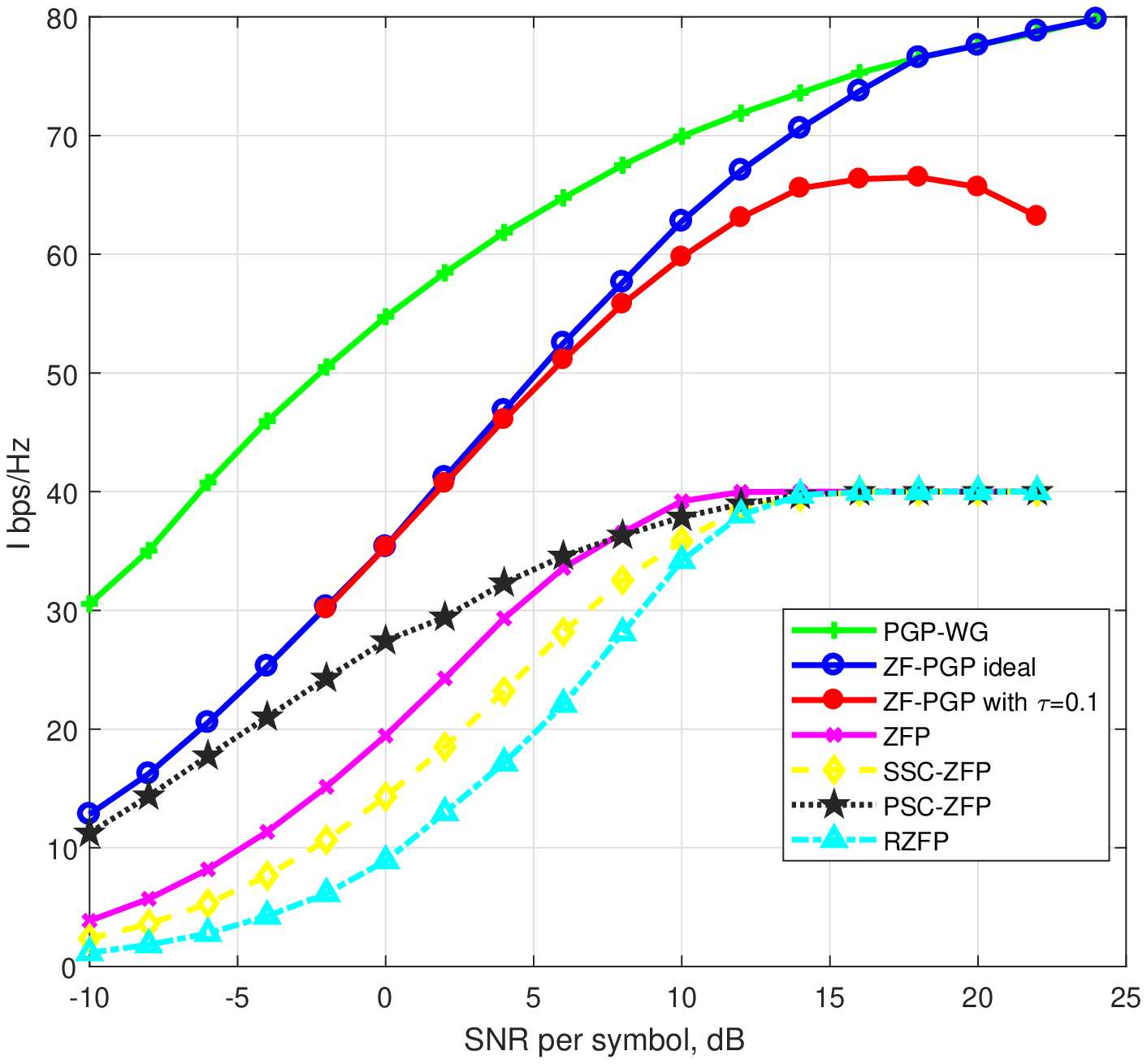}
	\caption{$I({\mathbf x};{\mathbf y})$ results for ZF-PGP (ideal and with channel estimation errors ($\tau=0.1$)), PGP-WG, ZFP, SSCP, PSCP, and RZFP cases for the $40\times12$ group CFSDM subgroup 1 channel and in conjunction with QAM $M=16$ modulation CFSDM.}
\end{figure}
In all cases considered so far, the gains offered by ZF-PGP in low $\mathrm{SNR}$ over ZFP are even higher than the ones in high $\mathrm{SNR}$.

In the rest of the results, we compare the performance of ZF-PGP with the original channel GCMI and ZFP, albeit based on $\mathrm{SNR}_b$. This approach gives a better idea about the power efficiency of different precoding techniques. In Fig. 6, 7, and 8 we present the corresponding results for the groups used in Fig. 3, 4, and 5, respectively.

In Fig. 6 we observe that for all $\mathrm{SNR}_b>4~dB$, ZF-PGP with error level equal to $0.1,~0.2,~\text{and}~0.3$ offers higher spectral efficiency than the original channel GCMI with ideal CSI at the receiving UEs. Due to the increased error levels, we can safely contend that if estimation errors are present on the uplink channel estimation, the ZF-PGP with estimation errors at both the BS and at the UEs will still perform significantly better than the ZFP with ideal CSI, a major conclusion in this paper.
\begin{figure}[h]
\centering
\setcounter{figure}{5}
\includegraphics[height=3.6in,width=5.25in]{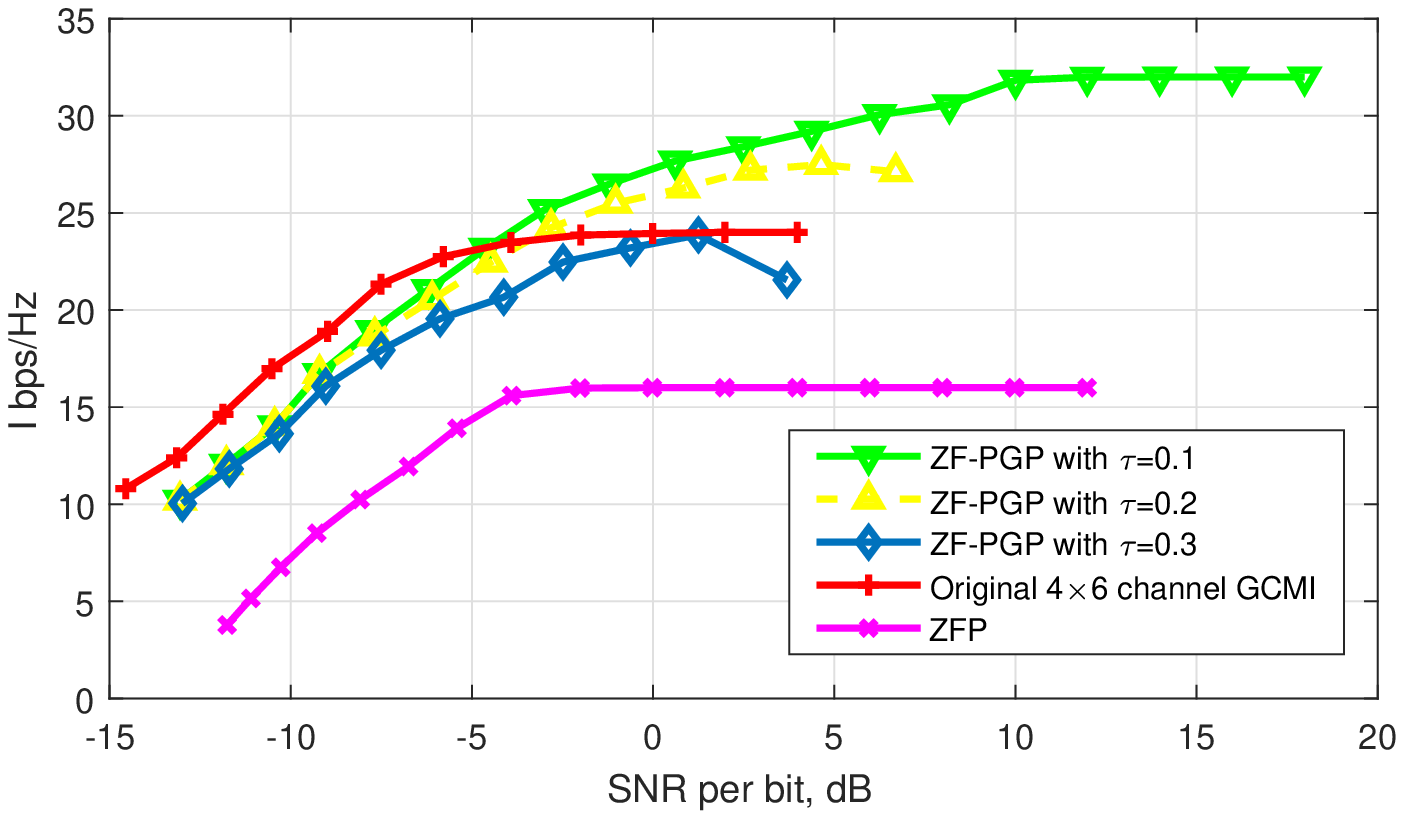}
	\caption{$I({\mathbf x};{\mathbf y})$ results with respect to $\mathrm{SNR}_b$ for ZF-PGP (with channel estimation errors ($\tau=0.1,~0.2,~ \text{and}~ 0.3$)), ZFP, and the original channel GCMI for the $4\times6$ group channel and in conjunction with QAM $M=16$ modulation.}
\end{figure}
In Fig. 7 the same type of behavior is observed, i.e., for $\mathrm{SNR}>3~dB$ a ZF-PGP system with error level $\tau=0.1$ offers higher spectral efficiency than the original channel GCMI with ideal CSI at the UEs.
\begin{figure}[h]
\centering
\setcounter{figure}{6}
\includegraphics[height=3.6in,width=5.25in]{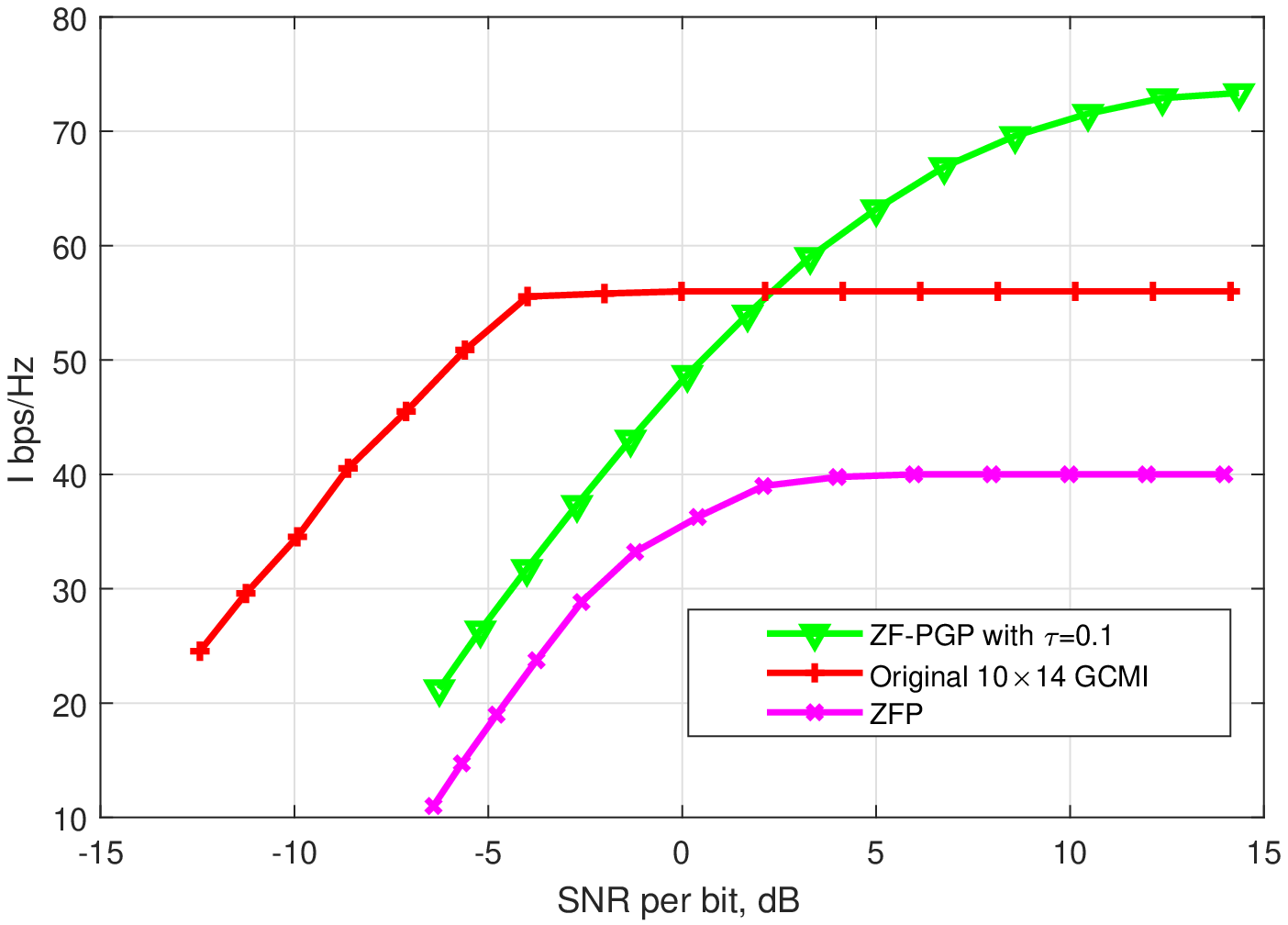}
	\caption{$I({\mathbf x};{\mathbf y})$ results with respect to $\mathrm {SNR}_b$ for ZF-PGP (with channel estimation errors ($\tau=0.1$)), ZFP, and the original channel GCMI for the $30\times14$ group CFSDM subgroup 3 channel and in conjunction with QAM $M=16$ modulation and CFSDM.}
\end{figure}
Finally, in Fig. 8 we see that ZF-PGP with error level $\tau=0.1$ offers higher spectral efficiency than the original channel GCMI with ideal CSI at the UEs. In all cases considered with regards to $\mathrm{SNR}_b$, ZF-PGP with errors is much more power efficient than ZFP as shown by the plots. It is worthwhile stressing that PGP-WG employed in \cite{TK_EA_TCOM2} is always equal or better in spectral efficiency over the original GCMI one with respect to power efficiency ($\mathrm{SNR}_b$).
\begin{figure}[h]
\centering
\setcounter{figure}{7}
\includegraphics[height=3.6in,width=5.25in]{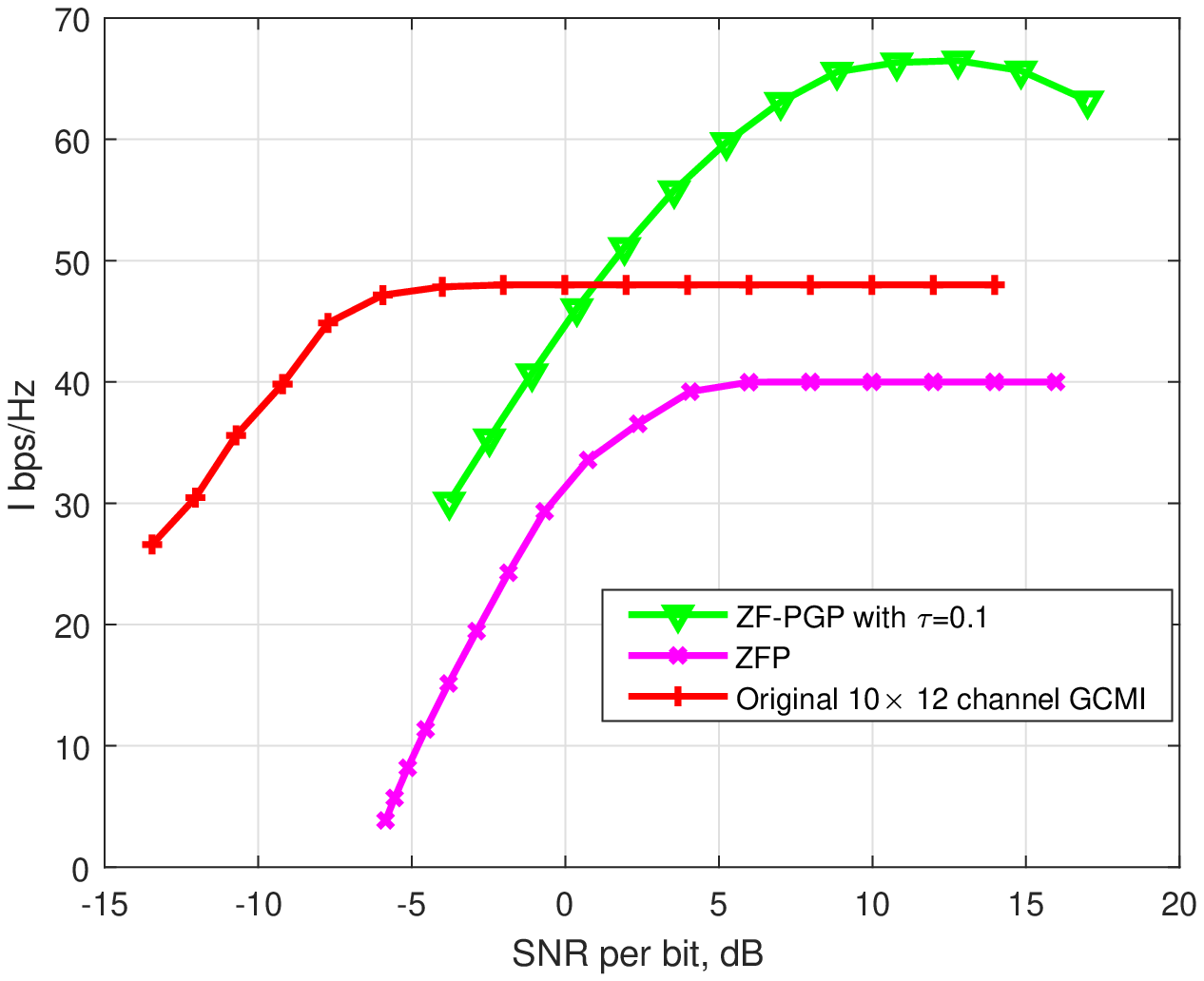}
	\caption{$I({\mathbf x};{\mathbf y})$ results with respect to $\mathrm {SNR}_b$ for ZF-PGP (with channel estimation errors ($\tau=0.1$)), ZFP, and the original channel GCMI for the $40\times12$ group CFSDM subgroup 1 channel and in conjunction with QAM $M=16$ modulation and CFSDM.}
\end{figure}

\section{Conclusions}
In this paper, we propose a new, combined Zero-Forcing Per-Group Precoding (ZF-PGP) method that achieves very high spectral efficiency gains in comparison to ZF precoding techniques, while it simultaneously offers individually separate streams to reach individual UE, i.e., it obliterates the need for coordinated, joint decoding by the group's UEs. We show that for correlated channels and/or when there are more antennas in a group than the number of VCMBs in the group, the near-optimal ZF-PGP is determined from an effective channel singular value matrix which is calculated easily from the singular values and the right singular vectors of the group's uplink virtual channel in the JSDM-FA decomposition. We show that at high SNR ZF-PGP approaches the original PGP-WG near-optimal performance. In addition, ZF-PGP is shown to be robust to estimation errors. The gains of ZF-PGP over ZF type precoders are documented by many examples and shown to be significant, i.e., higher than 50\% in high $\mathrm{SNR}$ even with errors. Finally, ZF-PGP offers higher performance than the original channel GCMI one, although the latter is assumed with perfect CSI at the receiving UEs.
\appendices
\section{Expressions for $w_m$}
From Section III the expression for ${\mathbf M}_{m,m}$ with $m=1,2,\cdots, N_{d,g}$ is
\begin{equation}
{\mathbf M}_{m,m} = ({\mathbf V}_{g,v})_{m,\cdot}{\tilde{\boldsymbol \Sigma}}_{g,v}^{-2}({\mathbf V}_{g,v})_{m,\cdot}^H=\sum_{m'=1}^{N_{d,g}}\frac{|({\mathbf V}_{g,v})_{m,m'}|^2}{s_{g,v,m'}^2}. \label{eq_w}
\end{equation}
Upon taking the square root of (\ref{eq_w}), the desired expression is obtained.

\section{Gauss-Hermite approximation to $I_{msm}$}
In expression (\ref{eq_msm}), we can apply a similar Gauss-Hermite approximation to the one applied in \cite{TK_EA_TCOM2}. Using this, we can approximate (\ref{eq_msm}) as
\begin{equation}
\begin{split}
&I({\mathbf x};{\mathbf y})  \approx N_t\log_2(M) -\frac{N_r}{\log(
2)} -\frac{1}{M^{N_t}}\sum_{k=1}^{M^{N_t}}{\hat f}_k,\\
\label{eq_PAPER}
\end{split}
\end{equation}
where
\begin{equation}
\begin{split}
&{\hat f}_k =  \left(\frac{1}{\pi}\right)^{N_r}\sum_{k_{r1}=1}^{L} \sum_{k_{i1}=1}^{L}\cdots \sum_{k_{rN_r}=1}^{L} \sum_{k_{iN_r}=1}^{L} c(k_{r1})c(k_{i1})\cdots \\
& c(k_{rN_r})c(k_{iN_r})g_k(\sigma n_{{k_{r1}}}, \sigma n_{{k_{i1}}},\cdots,\sigma n_{{k_{rN_r}}}, \sigma n_{{k_{iN_r}}}), \label{eq_f}
\end{split}
\end{equation}
with
\begin{equation}
g_k(\sigma v_{{k_{r1}}}, \sigma v_{{k_{i1}}},\cdots,\sigma v_{{k_{rN_r}}}, \sigma v_{{k_{iN_r}}})
\end{equation}
being the value of the function
\begin{equation}
\log_2\left(\sum_{m}\exp (-\frac{1}{\sigma^2}||{\mathbf n}-{\hat{{\mathbf H}}}_{g,v}^H{\hat{\mathbf P}}{\mathbf x}_k+{\mathbf H}_{g,v}^H{\mathbf P}{\mathbf x}_m||^2)\right) \label{eq_basic}
\end{equation}
evaluated at ${\mathbf n}_e =\sigma{\mathbf v}(\{k_{rv},~k_{iv}\}_{v=1}^{N_r})$,
where for the model in this paper, $N_t=N_r=N_{d,g}$.

\section{VAAC Rationale}
Here the concept of adding virtual antennas, i.e., additional data streams to the same antennas employed by a MIMO system is explained in detail. Without a loss of generality, we consider a MIMO system with equal number of transmitting and receiving antennas, i.e., $N_t=N_r=N$, where $N_t$, and $N_r$ represent the number of transmitting, and receiving antennas, respectively. The channel model under consideration then becomes
\begin{equation}
{\mathbf y}={\mathbf H}{\mathbf x}+{\mathbf n},
\end{equation}
where ${\mathbf y},~{\mathbf H},~{\mathbf x},~\text{and}~{\mathbf n}$ represent the received data, the MIMO channel, the transmitted data, and the AWGN noise, respectively, and where  matrices are of size $N\times N$ and vectors are of size $N\times 1$. The equivalent singular value decomposition based model for the MIMO channel is
\begin{equation}
{\mathbf y}={\mathbf U}_H{\boldsymbol \Sigma}_H{\mathbf V}_H^H{\mathbf x}+{\mathbf n}, \label{eq_MIMO}
\end{equation}
with ${\mathbf U}_H,~{\boldsymbol \Sigma}_H,~{\mathbf V}_H$ representing the size $N\times N$ matrices of left singular vectors, singular values, and right singular vectors, respectively.
Consider adding $N$ virtual antennas of zero singular values, i.e., useless, noise-only channels. This can be added to the previous model as follows

\begin{eqnarray}
\begin{aligned}
\left[\begin{array}{c  } {\mathbf y}\\
{\mathbf y}_a\end{array} \right]&=&
\left[\begin{array}{c c } {\mathbf U}_H & {\mathbf 0}\\
{\mathbf 0} & {\mathbf I}_N\end{array} \right]
\left[\begin{array}{c c } {\boldsymbol  \Sigma}_H & {\mathbf 0}\\
{\mathbf 0} & {\mathbf 0}\end{array} \right]
\left[\begin{array}{c c } {\mathbf V}_H^H & {\mathbf 0}\\
{\mathbf 0} & {\mathbf I}_N\end{array} \right]
\left[\begin{array}{c  } {\mathbf x}\\
{\mathbf x}_a\end{array} \right] +
\left[\begin{array}{c  } {\mathbf n}\\
{\mathbf n}_a\end{array} \right], \label {eq_VAAC}
\end{aligned}
\end{eqnarray}
where the subscript $a$ is used to indicate the $N$ added, fictitious antennas. In the above equation, the vector ${\mathbf x}_a$ represents the $N$ added QAM inputs to the MIMO system. Note that in (\ref{eq_VAAC}) the inputs represented by ${\mathbf x}_a$ cannot be transmitted, due to their corresponding zero singular values (noise-only channel) which result in zero input-output mutual information. However, one can still apply the virtual model of (\ref{eq_VAAC}) with PGP-WG. The PGP-WG algorithm will optimize and assign an amplitude diagonal matrix as per PGP-WG ${\boldsymbol \Sigma}_{P_i}=\mathrm{diag}[\sqrt{ 2}, ~0]$, $i=1,2,\cdots,K_g$ for each sub-group of the PGP-WG, i.e., no power sent to the noise-only antenna. On the other hand, PGP-WG will also determine the optimal unitary precoder matrix to each sub-group in the PGP-WG, thus it will be multiplexing optimally two QAM symbols to each actual transmitting antenna of the original MIMO system in (\ref{eq_MIMO}).
\bibliographystyle{IEEEtran}

\end{document}